\documentclass[conference]{IEEEtran}
\IEEEoverridecommandlockouts
\usepackage[noadjust]{cite}

\usepackage{amsmath,amssymb,amsfonts}
\usepackage{etoolbox}
\usepackage{algorithmic}
\usepackage{graphicx}
\usepackage{tikz}
\usepackage{pgfplots}
\pgfplotsset{compat=1.13}
\usepackage{float}
\usepackage{textcomp}
\usepackage{xcolor}
\usepackage[all]{xy} 
\usepackage{bm}
\usepackage{multirow}
\usetikzlibrary{shapes,snakes,patterns}

\pgfdeclarepatternformonly{soft horizontal lines}{\pgfpointorigin}{\pgfqpoint{100pt}{1pt}}{\pgfqpoint{100pt}{3pt}}%
{
	\pgfsetstrokeopacity{0.3}
	\pgfsetlinewidth{0.1pt}
	\pgfpathmoveto{\pgfqpoint{0pt}{0.5pt}}
	\pgfpathlineto{\pgfqpoint{100pt}{0.5pt}}
	\pgfusepath{stroke}
}

\pgfdeclarepatternformonly{soft crosshatch}{\pgfqpoint{-1pt}{-1pt}}{\pgfqpoint{4pt}{4pt}}{\pgfqpoint{3pt}{3pt}}%
{
	\pgfsetstrokeopacity{0.3}
	\pgfsetlinewidth{0.4pt}
	\pgfpathmoveto{\pgfqpoint{3.1pt}{0pt}}
	\pgfpathlineto{\pgfqpoint{0pt}{3.1pt}}
	\pgfpathmoveto{\pgfqpoint{0pt}{0pt}}
	\pgfpathlineto{\pgfqpoint{3.1pt}{3.1pt}}
	\pgfusepath{stroke}
}

\def\BibTeX{{\rm B\kern-.05em{\sc i\kern-.025em b}\kern-.08em
    T\kern-.1667em\lower.7ex\hbox{E}\kern-.125emX}}

\newcommand{\beq}{\begin{equation}}
\newcommand{\enq}{\end{equation}}
\newcommand{\bel}{\begin{lemma}}
\newcommand{\enl}{\end{lemma}}
\newcommand{\bet}{\begin{theorem}}
\newcommand{\ent}{\end{theorem}}

\newcommand{\suppress}[1]{}

\mathchardef\mhyphen="2D

\def\sM{\mathsf{M}}
\def\sL{\mathsf{L}}

\makeatletter
\newcommand*{\rom}[1]{\expandafter\@slowromancap\romannumeral #1@}
\makeatother

\mathchardef\mhyphen="2D

\newtheorem{remark}{Remark}
\newtheorem{theorem}{Theorem}
\newtheorem{lemma}{Lemma}
\newtheorem{corollary}{Corollary}
\newtheorem{proposition}{Proposition}

\makeatletter

\newcommand{\Rmnum}[1]{\expandafter\@slowromancap\romannumeral #1@}
\makeatother

\newcommand{\Co}{C}

\graphicspath{{./fig/}}
\DeclareGraphicsExtensions{.pdf} 

\def\QED{\mbox{\rule[0pt]{1.5ex}{1.5ex}}}

\def\Label#1{\label{#1}\ [\ \text{#1}\ ]\ }
\def\Label{\label}

\AtBeginEnvironment{equation}{\small}
\AtBeginEnvironment{equation*}{\small}
\AtBeginEnvironment{align}{\small}
\AtBeginEnvironment{align*}{\small}
\AtBeginEnvironment{gather}{\small}
\AtBeginEnvironment{gather*}{\small}
\AtBeginEnvironment{multline}{\small}
\AtBeginEnvironment{multline*}{\small}
\begin{document}

\title{Two-Way Wiretap Channel under Mixed Secrecy Constraint*\\
	\thanks{MH was supported in part by the General R\&D Projects of 1+1+1 CUHK-CUHK(SZ)-GDST Joint Collaboration Fund (No. GRDP2025-022) and the Guangdong Provincial Quantum Science Strategic Initiative (No. GDZX2505003).}
}

\author{\IEEEauthorblockN{Yanling Chen}
	\IEEEauthorblockA{		
		\textit{Volkswagen Infotainment GmbH, Germany}\\
		Bochum, Germany \\
		Email: yanling.chen@volkswagen-infotainment.com}
	\and
	\IEEEauthorblockN{Masahito Hayashi}
	\IEEEauthorblockA{
		\textsuperscript{1}\textit{School of Data Science, The Chinese University of Hong Kong}\\
		Shenzhen, China\\
		\textsuperscript{2}\textit{International Quantum Academy, Shenzhen, China}\\
		\textsuperscript{3}\textit{Graduate School of Mathematics, Nagoya University}\\
		Nagoya, Japan\\
		Email: hmasahito@cuhk.edu.cn}
}

\maketitle

\begin{abstract}
This paper studies the two-way wiretap channel (TW-WC) with an external eavesdropper under strong one-sided (mixed) secrecy: only User~1's message is required to be secure from the eavesdropper, while no secrecy constraint is imposed on User~2's message. Secrecy is measured by the information leakage of User~1's message to the eavesdropper. 
Applying a non-adaptive construction and a one-sided reduced key-exchange construction, we obtain exponential error and leakage bounds for every fixed number of adaptive rounds. A quantitative one-time-pad argument propagates the leakage of the key used in the next round, and an explicit padding construction removes the initialization-rate loss for all sufficiently large blocklengths. We derive the corresponding strong mixed-secrecy achievable regions. For every product input distribution satisfying the three strict feasibility conditions of the non-adaptive construction, its strong-secrecy inner bound contains the previously reported weak one-sided single-letter inner bound. The overall adaptive achievable region includes the non-adaptive construction as a fallback and also contains a key-exchange subregion that can be strictly larger. 
\end{abstract}

\begin{IEEEkeywords}
two-way wiretap channel, strong secrecy, adaptive coding, channel resolvability, R\'enyi mutual information.
\end{IEEEkeywords}

\section{Introduction}

The two-way communication channel (TWC) was first introduced by Shannon in \cite{Shannon1961}, where he focused on the discrete memoryless TWC and established an inner bound and an outer bound for the capacity region. In general, it is known that Shannon's inner bound does not coincide with Shannon's outer bound \cite{Dueck1979, Schalkwijk1983}. 
So far, the capacity region of the TWC has been determined only for some special cases \cite{Shannon1961, Han1984, Varshney2013, SAL2016, CVA2017}, while a single-letter characterization of the capacity region of a general TWC remains open. 

Shannon's inner bound is achieved by non-adaptive encoders whose inputs depend only on the messages and not on past outputs. In contrast, the outer bound allows dependent inputs, as may arise from adaptation to past outputs. The difficulty in characterizing the general capacity region is to identify the appropriate form of adaptation or input dependence \cite{ZBS1986, PW1989, TU2007}. 

Information theoretic secrecy was also introduced by Shannon but in a different pioneering paper \cite{Shannon1949}. In particular, he formulated the concept of perfect secrecy, where the eavesdropper's observation $\mathbf{Z}$ provides no information on the transmitted message $M$ (i.e., the information leakage to the eavesdropper $I(M;\mathbf{Z})=0$). On the other hand, Wyner in \cite{Wyner1975} proposed the wiretap channel, which is the one-way discrete memoryless channel with an external eavesdropper; and generalized the ``perfect secrecy" to an ``asymptotic perfect secrecy" by measuring the normalized equivocation at the eavesdropper about the message $M$ given the eavesdropper's observation $\mathbf{Z}.$ This normalized equivocation is equivalent to the information leakage rate to the eavesdropper, which is often referred to as "weak secrecy" in the literature, as it tolerates the fact that the eavesdropper might obtain a substantial amount of information in an absolute sense \cite{src:Csiszar1996,MH2006}, i.e.,  $I(M;\mathbf{Z}) \nrightarrow 0.$ 

The model that introduces an external eavesdropper to a TWC is called two-way wiretap channel (TW-WC). The TW-WC was first investigated in \cite{TY2007, TY2008} for both the Gaussian TW-WC and the binary additive TW-WC. 
Inner bounds on the weak secrecy capacity regions for both channels were derived using non-adaptive coding. 
The general discrete memoryless TW-WC was considered in \cite{GKYG2013}. A weak joint secrecy achievable region was established therein by using an instance of adaptive coding (i.e., each user sacrifices part of its secret rate to transmit a key to the other user and the keys are to be used in encrypting partial messages in next transmission round). %
Besides the weak joint secrecy, weak one-sided secrecy and individual secrecy were considered in \cite{QCHT2016} and \cite{QDT2017}, respectively, and respective weak secrecy rate regions were established therein.

Besides the above mentioned studies on TW-WCs under weak secrecy, \cite{PB2011} considered the general discrete memoryless TW-WC channel under a strong joint secrecy constraint, where the information leakage to the eavesdropper (rather than the leakage rate) is required to be negligible (i.e., $I(M_1, M_2;\mathbf{Z}) \to 0$). 
Strong secrecy for the TW-WC was studied using non-adaptive coding in \cite{HC2023} and adaptive coding in \cite{CH2025}, under joint or individual secrecy constraints. 

In this paper, we study the TW-WC under a strong mixed/asymmetric secrecy constraint, namely, strong one-sided secrecy in which only User~1's message is required to be secure. We consider two adaptive key-exchange constructions. The first is the full symmetric rate-splitting construction inherited from the multiround adaptive coding scheme of \cite{CH2025}. By specializing its multiround reliability and secrecy analysis to the leakage of User~1's payload, we obtain a strong one-sided achievable region for this construction. We then introduce a one-sided reduced construction that removes the key and encrypted-message components that are unnecessary for User~2. For the reduced construction, we provide a direct multiround analysis based on selected-subindex resolvability, a one-time-pad inequality with side information, and an ideal-key to actual-code transfer argument. The two operationally achievable constructions yield the same projected payload-rate region, showing that the reduced construction has a simpler design without loss in the achievable region. In particular, User~2's unprotected message contributes to the averaging that protects User~1. We derive explicit non-adaptive and adaptive achievable regions, identify the exact strict feasibility conditions for the auxiliary-rate projections, and compare the resulting construction-specific inner bounds. 

The remainder of this paper is organized as follows. In Section~\ref{sec:model}, we formulate the TW-WC under the strong mixed-secrecy constraint and introduce the necessary preliminaries. In Sections~\ref{sec: Non-Adaptive Coding} and~\ref{sec: Adaptive coding}, we present the non-adaptive and adaptive coding constructions, respectively, and derive their achievable strong mixed-secrecy regions. Section~\ref{Sec: Conclusion} concludes the paper. The appendices provide detailed proofs and Fourier--Motzkin elimination steps used to derive the stated regions.

\section{Channel model and Preliminaries}\Label{sec:model}
\subsection{Formulation}
We consider a discrete memoryless TW-WC, where 
two legitimate users, User 1 and User 2 intend to exchange 
messages with each other in the presence of an external eavesdropper.
The channel is characterized by $P_{Y_1,Y_2,Z|X_1,X_2}.$ 
The channel model is shown in Fig. \ref{fig: TW-WTC with an external eavesdropper}. 

\begin{figure}[htbp]
	\centering
	\begin{tikzpicture}[>=stealth,scale=.78,transform shape]
		\node (w1e) at (0,0) {$M_{1}$};
		\node [draw] (en1) at (2,0) {Encoder 1};
		\node [draw,rounded corners,fill=yellow!10,minimum width=2cm,minimum height=0.80cm] (c) at (5,0) {$P_{Y_1,Y_2,Z|X_1,X_2}$};
		\node [draw] (en2) at (8,0) {Encoder 2};
		\node (w2) at (10,0) {$M_2$};	
		\node (y1) at (0,-1.2) {$\hat{M}_{2}$};
		\node [draw] (de1) at (2,-1.2) {Decoder 1};
		\node [draw] (de2) at (8,-1.2) {Decoder 2};
		\node (y2) at (10,-1.2) {$\hat{M}_1$};	
		\path[-stealth] (w1e) edge (en1);
		\draw[->] (en1) --  (c) node[midway, above=0.1] {$X_1$};
		\draw[->] (en2) --  (c) node[midway, above=0.1] {$X_2$};
		\path[-stealth] (w2) edge (en2);
		\path[-stealth] (de1) edge (y1);
		\path[-stealth] (de2) edge (y2);
		\draw[->] (4.2,-0.45) -- (4.2,-1.2) -- (de1);
		\draw[->] (5.8,-0.45) -- (5.8,-1.2) -- (de2);
		\draw[->] (5,-0.45) -- (5,-1.3);
		\node at (3.7,-1.5) {$Y_1$};
		\node at (6.3,-1.5) {$Y_2$};
		\node at (5,-1.5) {$Z$};
		\node at (1.8,-2) {User 1};
		\node at (8.1,-2) {User 2};
		\node at (5,-2) {Eavesdropper};
	\end{tikzpicture}
	\caption{Two-Way wiretap channel with an external eavesdropper.} \Label{fig: TW-WTC with an external eavesdropper}
\end{figure}

All logarithms are natural, and rates are measured in nats per channel use. The messages $M_i$ are assumed to be uniformly distributed over the message sets
$\mathcal{M}_i=[1:\lfloor e^{nR_i}\rfloor]$ for $i=1,2$. As usual, integer roundings of exponential codebook sizes are suppressed below because they do not affect the asymptotic rates.

Consider the communication in $n$ channel uses.
We denote User $i$'s channel input and output 
by $X_i^n=(X_{i,1}, \ldots, X_{i,n}) \in {\cal X}_i^n$ and 
$Y_i^n=(Y_{i,1}, \ldots, Y_{i,n})\in {\cal Y}_i^n$, respectively.
Also, we denote the channel output at the eavesdropper by 
$Z^n=(Z_{1}, \ldots, Z_{n})\in {\cal Z}^n$. 

We consider two types of encoders at two legitimate users.
\begin{itemize}
	\item The first type of encoder is a {\it non-adaptive} encoder $\phi_i$, which stochastically assigns the whole input $X_i^n$ based on the message $M_i$ for $i=1,2$.
	\item The second type of encoder is an {\it adaptive} encoder $\phi_i$, which stochastically assigns the $t$-th input $X_{i,t}$ based on the message $M_i$ and 
	the previous outputs $Y_{i,1},\ldots, Y_{i,t-1}$ for $t=1,\ldots,n$ and $i=1,2$.
\end{itemize}
For $i\in\{1,2\}$, let $i\oplus1:=3-i$. The decoder $\psi_i$ of User $i$ is a map from ${\cal M}_i\times{\cal X}_i^n\times{\cal Y}_i^n$ to ${\cal M}_{i \oplus 1}$; it may use the user's own message and transmitted sequence together with the received sequence. 

A $(e^{nR_1}, e^{nR_2}, n)$ secrecy code 
$\Co_n$ for the TW-WC consists of
$2$ message sets $\mathcal{M}_1, \mathcal{M}_2$,
$2$ encoders $\phi_1$, $\phi_2$, and
$2$ decoders $\psi_1$, $\psi_2$. Whether the code is adaptive or non-adaptive depends on whether adaptive or non-adaptive encoders are used. 

To evaluate the reliability of the transmission, we consider the \emph{average probability of decoding error} at the legitimate receiver that is defined by
	\begin{equation}\Label{eqn: Pe}
		P_{e}^n(\Co_n)=\frac{1}{|\mathcal M_1||\mathcal M_2|}\sum_{m_1, m_2}\Pr\left\{\bigcup\limits_{i\in \{1,2\}} \{m_i\neq \hat{m}_i\}\Bigg|\Co_n\right\}.
	\end{equation}

For secrecy, only User~1 requires protection from the eavesdropper. We call this requirement strong one-sided secrecy; it is the mixed secrecy constraint in the title. No secrecy constraint is imposed on $M_2$. 
We say that the rate pair $(R_{1,n},R_{2,n})$ is \emph{achievable under the strong mixed secrecy constraint by adaptive (non-adaptive) codes}, if there exists a sequence of $(e^{nR_{1,n}}, e^{nR_{2,n}}, n)$ adaptive (non-adaptive) codes $\{\Co_n\}$ such that
$R_{i,n} \to R_i$ for $i=1,2$, and the following bounds hold:
\begin{align} 
	P_{e}^n(\Co_n) &\leq \epsilon_n, \Label{eq:Reliability} \\
	I(M_1;Z^n|\Co_n)	&\leq	\tau_n, \Label{eq:SSec}
\end{align}
with $\lim\limits_{n\to\infty} \epsilon_n = 0$ and $\lim\limits_{n\to\infty} \tau_n= 0.$ 

\subsection{Preliminaries}\Label{sec: Renyi Lemmas}
In this section, we introduce some definitions that will be used in the paper.

First, we recall that the {\it R\'enyi relative entropy} is defined as follows:
\begin{align}\Label{eqn: Renyi relative entropy}
	D_{1+s}(P \| Q):=\frac{1}{s}\log  \sum_{x}P(x)^{1+s} Q(x)^{-s}.
\end{align}
Note that $D_{1+s}(P \| Q)$ is nondecreasing w.r.t. $s$ for $s>0$ and $\lim\limits_{s\to 0}D_{1+s}(P \| Q)=D(P \| Q),$ i.e., the relative entropy.

Following the notation in \cite[Eq. (36)]{HT16} and \cite[Eqs. (50), (52)]{TH}, we define the {\it R\'enyi mutual information} by the following expression:
\begin{align}
	I_{1+s}^{\uparrow} ( Z ; X) 
	:=D_{1+s} (P_{Z X}
	\| P_Z \times P_{X}   ).\Label{Ne1}
\end{align}
We define the {\it conditional R\'enyi mutual information} by the following identity:
\begin{align}
	& e^{-s I_{\frac{1}{1+s}}^{\downarrow}( Z ; X|Y) }
	:=\\
	&\quad
	\sum_y P_Y(y)
	e^{-s \min\limits_{Q_{Z|Y=y}}D_{\frac{1}{1+s}} (P_{Z X|Y=y} \|  Q_{Z|Y=y} \times P_{X|Y=y}   )}. \nonumber
\end{align}
The minimizing conditional distribution is
\begin{equation}
	Q_{Z|Y}^*(z|y)= \frac{\sum\limits_{ x} P_{X|Y}(x|y)
		P_{Z| X Y}(z|x,y)^{\frac{1}{1+s}}}{\sum\limits_{z'} \sum\limits_{ x'} P_{X|Y}(x'|y)
		P_{Z| X Y}(z'|x',y)^{\frac{1}{1+s}}},
\end{equation}
Substituting this minimizer gives the following expression for $I_{\frac{1}{1+s}}^{\downarrow}(Z;X|Y)$ \cite[Eq. (54)]{TH}:
\begin{align}
	&	e^{-s I_{\frac{1}{1+s}}^{\downarrow} ( Z ; X| Y ) }\Label{Ne3}\\
	=
	&\sum_{ y} P_Y(y)
	\sum_{z} 
	\Big(
	\sum_{ x} P_{X|Y}(x|y)
	P_{Z| X Y}(z|x,y)^{\frac{1}{1+s}}
	\Big)^{1+s}.\notag
\end{align}

Note that we have
\begin{align}
	\lim_{s\to 0} I_{1+s}^{\uparrow} (Z;X)& = I(Z;X); \label{lim_up}\\
	\lim_{s\to 0} I_{\frac{1}{1+s}}^{\downarrow} (Z;X|Y) &= I(Z;X|Y). \label{lim_down}
\end{align}

\section{Non-Adaptive Coding}\label{sec: Non-Adaptive Coding}

In this section, we consider the case where non-adaptive codes are used by both legitimate users.
Throughout this section, let $\mathcal Q$ denote the set of all probability distributions on $\mathcal V_1\times\mathcal V_2\times\mathcal X_1\times\mathcal X_2$ having the following factorization:
\begin{equation}\label{def: Q}
\mathcal Q:=\left\{P_{V_1}P_{V_2}P_{X_1\mid V_1}P_{X_2\mid V_2}\right\}.
\end{equation}
Thus every $P\in\mathcal Q$ satisfies
$P_{V_1V_2X_1X_2}=P_{V_1}P_{V_2}P_{X_1\mid V_1}P_{X_2\mid V_2}$.
Together with the fixed channel $P_{Y_1,Y_2,Z\mid X_1,X_2}$, each $P\in\mathcal Q$ induces the joint distribution under which the information quantities below are evaluated.

\subsection{Code construction}
Fix an arbitrary distribution of the form
\[
P=P_{V_1}P_{V_2}P_{X_1\mid V_1}P_{X_2\mid V_2}\in\mathcal Q.
\]
{\it Codebook Generation:}
At transmitter $i,$ let $V_{i,1}^n,\ldots, V_{i, \sM_i \cdot \sL_i}^n$
be random variables that are independently generated subject to
$P_{V_i^n}=\prod_{j=1}^{n}P_{V_{i,j}}, $ where
$V_i^n=(V_{i,1}, \cdots, V_{i,n}),$
$\sM_i= e^{n R_i}$ and
$\sL_i= e^{n R_{i,r}}$ for $i=1,2.$

\noindent {\it Encoding:}
To send the message $m_i$, the legitimate user $i$ chooses one of
${V}^n_{i,(m_i-1) \sL_i+1}, \ldots, {V}^n_{i,m_i \sL_i}$ with equal probability, and denotes the selected codeword by ${V}^n_{i,(m_i,m_{i,r})}$. Conditional on this codeword, transmitter $i$ generates $X_i^n$ memorylessly according to $\prod_{j=1}^n P_{X_i|V_i}(x_{i,j}|v_{i,j})$ and transmits the resulting $X_i^n$.

\noindent {\it Decoding:}
The other user applies ML decoding to ${V}^n_{i,(m_i,m_{i,r})}$ (and hence $m_i$) using its own transmitted sequence $X_{i\oplus1}^n$ and received sequence $Y_{i\oplus1}^n$.

\subsection{Exponential evaluation}
Specializing the User~1 leakage bound in \cite[Lemma~4]{HC2023}, we obtain the following one-sided specialization, which gives exponentially decreasing decoding error and information leakage.

For every $P\in\mathcal Q$ and $s\in(0,1]$, define the finite-$s$ quantities as follows:
\begin{align*}
 A_s(P)&:=I_{\frac{1}{1+s}}^{\downarrow}(Y_2;V_1\mid X_2),&
 B_s(P)&:=I_{\frac{1}{1+s}}^{\downarrow}(Y_1;V_2\mid X_1),\\
 C_s(P)&:=I_{1+s}^{\uparrow}(Z;V_1),&
 D_s(P)&:=I_{1+s}^{\uparrow}(Z;V_2),\\
 E_s(P)&:=I_{1+s}^{\uparrow}(Z;V_1,V_2),
\end{align*}
and define the corresponding Shannon-information quantities as follows:
\begin{align*}
 A(P)&:=I(Y_2;V_1\mid X_2),& B(P)&:=I(Y_1;V_2\mid X_1),\\
 C(P)&:=I(Z;V_1),& D(P)&:=I(Z;V_2),\\
 E(P)&:=I(V_1,V_2;Z).
\end{align*}
Here all quantities on the right-hand sides are evaluated under the joint distribution induced by $P$ and the fixed channel. By \eqref{lim_up}--\eqref{lim_down}, the following convergence holds for every fixed $P\in\mathcal Q$ as $s\downarrow0$:
\begin{align}
&\bigl(A_s(P),B_s(P),C_s(P),D_s(P),E_s(P)\bigr) \notag\\
&\longrightarrow
\bigl(A(P),B(P),C(P),D(P),E(P)\bigr)
\end{align}
Throughout the paper, the dependence of these quantities on $P$ is displayed explicitly.

\begin{lemma}\Label{L3T}
For every fixed $P\in\mathcal Q$, every fixed $s\in(0,1]$, a rate pair $(R_1,R_2)$, and two positive numbers $R_{1,r},R_{2,r}$, there exists a sequence of $(e^{nR_1}, e^{nR_2}, n)$ non-adaptive codes $\Co_n$ satisfying the following reliability and leakage bounds:
\begin{align}
P_{e}^n(\Co_n)
\leq &2 \Big[e^{ns\left(R_1+R_{1,r}-A_s(P)\right)}
+e^{ns\left(R_2+R_{2,r}-B_s(P)\right)}\Big]
; \Label{eq:Reliability2D} \\
I(M_1;Z^n|\Co_n)
\leq &2 \Big[e^{ns\left(E_s(P)-(R_{1,r}+R_{2,r}+R_2)\right)}
\notag\\
&+e^{ns\left(C_s(P)-R_{1,r}\right)}+e^{ns\left(D_s(P)-(R_{2,r}+R_2)\right)}\Big].
\Label{eq:SSec2D}
\end{align}
\end{lemma}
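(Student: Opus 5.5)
We use a random-coding argument over the ensemble of the code construction. We bound the expected error probability and the expected leakage $\mathbb E_{\Co}[I(M_1;Z^n|\Co_n)]$ separately. We then use Markov's inequality (each term at most twice its ensemble average with probability larger than $1/2$) together with a union bound to extract one deterministic code satisfying both bounds. This is the source of the factor $2$ in \eqref{eq:Reliability2D} and \eqref{eq:SSec2D}.

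\emph{Reliability.} User~2 decodes the index $(m_1,m_{1,r})$ among $e^{n(R_1+R_{1,r})}$ codewords. It observes $(X_2^n,Y_2^n)$, and since $V_1$ is independent of $X_2$, the effective channel is $V_1\to(X_2,Y_2)$. Gallager's random-coding bound for ML decoding with parameter $\rho=s$ gives an average error of at most $e^{ns(R_1+R_{1,r})}\bigl(\sum_{x_2,y_2}(\sum_{v_1}P(v_1)P(x_2,y_2|v_1)^{1/(1+s)})^{1+s}\bigr)^n$. Factoring $P(x_2,y_2|v_1)=P(x_2)P(y_2|v_1,x_2)$ turns the bracket into exactly $e^{-sA_s(P)}$ by \eqref{Ne3}. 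The same argument at User~1 produces $B_s(P)$. A union bound over the two decoding events gives \eqref{eq:Reliability2D} for the ensemble average.

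\emph{Leakage.} We write $I(M_1;Z^n)\le D(P_{Z^n|M_1}\|P_{Z}^n|P_{M_1})$ and regard, for fixed $m_1$, the output $Z^n$ as induced by a uniform choice of $m_{1,r}$ and $(m_2,m_{2,r})$. Because $M_2$ is uniform and unprotected, it acts as additional randomness of size $e^{n(R_2+R_{2,r})}$. We then apply the multi-user resolvability bound with Rényi parameter $1+s$ (the one used in \cite[Lemma~4]{HC2023}), using $\log(1+x)\le x$ and $\mathbb E[\cdot^{s}]$ concavity. Expanding the expected divergence over whether the two random indices coincide with the ones in the "true" codeword pair produces three cross terms:
\begin{itemize}
\item both indices differ, which gives the $P_Z$ term and contributes nothing;
\item only the User~1 index is resampled, which gives $e^{ns(C_s-R_{1,r})}$;
\item only the User~2 index is resampled, which gives $e^{ns(D_s-(R_2+R_{2,r}))}$;
\item neither is resampled, which gives $e^{ns(E_s-(R_{1,r}+R_2+R_{2,r}))}$.
\end{itemize}
Each term is evaluated with $I^{\uparrow}_{1+s}$ via single-letterization of the i.i.d. codebook. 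Alternatively, the result follows by taking the bound of \cite[Lemma~4]{HC2023} for User~1 directly and noting that uniformity of $M_2$ lets $R_2$ add to $R_{2,r}$.

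The main obstacle is this leakage step. The cross terms must factor correctly given the product structure $P_{V_1}P_{V_2}$, and the inequality $(a+b+c)^s\le a^s+b^s+c^s$ for $s\in(0,1]$ must be applied so that the three exponents separate cleanly.
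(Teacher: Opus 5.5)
Your route is the paper's route. You bound the ensemble-average error (Gallager with $\rho=s$, which via \eqref{Ne3} produces $A_s(P)$ and $B_s(P)$) and the ensemble-average leakage of $M_1$ (resolvability in which User~2's whole index $(m_2,m_{2,r})$ supplies extra averaging), then use Markov's inequality to extract one code. The paper does exactly this by quoting the User~1 part of \cite[Lemma~4]{HC2023}.

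\textbf{The gap: the constant in the leakage bound.} The chain you describe is $I(M_1;Z^n)\leq D(P_{Z^n|M_1}\|P_Z^n|P_{M_1})$, then $D\leq D_{1+s}=\frac1s\log\sum P^{1+s}Q^{-s}$, Jensen, subadditivity of $x\mapsto x^s$, and $\log(1+x)\leq x$. This gives $\mathbb E\,I(M_1;Z^n)\leq\frac1s[\cdots]$, hence $\frac2s[\cdots]$ after Markov, not $2[\cdots]$.

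No argument can give the constant $2$ for all $s$ and $n$. Take $Y_1=X_2$, $Y_2=Z=X_1$, and $V_i=X_i$ uniform on $q$ letters. Then $A_s(P)=B_s(P)=C_s(P)=E_s(P)=\log q$ and $D_s(P)=0$. Choose $R_1+R_{1,r}=R_2+R_{2,r}=\log q-\delta$ with $e^{ns\delta}=40$ and $nsR_1=1$, and take $q$ large enough that $R_{1,r},R_{2,r}>0$.
\begin{itemize}
\item The right-hand side of \eqref{eq:Reliability2D} equals $0.1$.
\item The right-hand side of \eqref{eq:SSec2D} is at most $2(80e+1)<440$.
\item Here $Z^n=Y_2^n$, and $(M_2,X_2^n)$ is independent of $(M_1,X_1^n)$. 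So Fano gives $I(M_1;Z^n)\geq 0.9\,nR_1-\log 2=0.9/s-\log 2$ for every code meeting the reliability bound.
\item For $s=10^{-3}$ this exceeds $440$.
\end{itemize}
What you can actually establish is the statement with $2/s$ in place of $2$ in \eqref{eq:SSec2D}. The paper's citation-based proof does not address this either. Since $s$ is fixed in every later use, nothing downstream changes.

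\textbf{Two smaller points.}
\begin{itemize}
\item Your middle two resolvability terms are interchanged. Keeping User~1's index and resampling User~2's contributes a conditional mean of at most $e^{-nR_{1,r}}P_{Z^n|V_1^n}(\cdot|v_1^n)$, which yields $e^{ns(C_s(P)-R_{1,r})}$. Keeping User~2's index and resampling User~1's yields $e^{ns(D_s(P)-R_2-R_{2,r})}$.
\item For a fixed User~2 codebook, $X_2^n$ is not i.i.d. So the single-letter factorization in your Gallager step holds only after also averaging over User~2's codebook. This is legitimate because the bound is linear in the law of $X_2^n$, and that average is $P_{X_2}^{\otimes n}$.
\end{itemize}
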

\begin{IEEEproof}
This is the User~1 part of \cite[Lemma~4]{HC2023}. The expectation bounds in \cite[(50)]{HC2023} and \cite[(52)]{HC2023}, followed by Markov's inequality, yield one code satisfying both \eqref{eq:Reliability2D} and \eqref{eq:SSec2D}.
\end{IEEEproof}

\subsection{Achievable secrecy region}
Define the strict feasibility set
\begin{align}
\mathcal Q_{\rm F}:=
\Bigl\{P\in\mathcal Q:\;&
C(P)<A(P),\notag\\
&D(P)<B(P),\notag\\
&E(P)<A(P)+B(P)
\Bigr\}.
\label{eq:QN-definition}
\end{align}
These conditions characterize the nonemptiness of the strict
auxiliary-rate system used below.

For each $P\in\mathcal Q_{\rm F}$, define the fixed-distribution
non-adaptive region
\begin{align}
\mathcal R_{\rm N}(P):=
\Bigl\{(R_1,R_2)\in\mathbb R_+^2:\;&
R_1\leq A(P)-C(P),\notag\\
&R_1\leq A(P)+B(P)-E(P),\notag\\
&R_2\leq B(P)
\Bigr\}.
\label{HH2NN}
\end{align}

\begin{lemma}\label{lem:NA-achievable-region}
The following payload-rate region is achievable by non-adaptive
codes under the strong mixed-secrecy criterion:
\begin{equation}
\mathcal R_{\rm N}
:=
\overline{\operatorname{conv}}\!\left(
\bigcup_{P\in\mathcal Q_{\rm F}}
\mathcal R_{\rm N}(P)
\right).
\label{eq:NA-achievable-region}
\end{equation}
\end{lemma}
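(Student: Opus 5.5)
The plan is to fix $P\in\mathcal Q_{\rm F}$ and a rate pair $(R_1,R_2)$ in the interior of $\mathcal R_{\rm N}(P)$, meaning all three inequalities in \eqref{HH2NN} are strict. For such a pair we will find positive randomization rates $R_{1,r},R_{2,r}$ such that every exponent in Lemma~\ref{L3T} is strictly negative for some small $s$. The sufficient Shannon-level conditions are
\begin{align*}
&R_1+R_{1,r}<A(P),\quad R_2+R_{2,r}<B(P),\quad R_{1,r}>C(P),\\
&R_{2,r}+R_2>D(P),\quad R_{1,r}+R_{2,r}+R_2>E(P).
\end{align*}

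\emph{Step 1: show this auxiliary system is feasible with $R_{1,r},R_{2,r}>0$.} This is a Fourier--Motzkin elimination. To eliminate $R_{1,r}$, we need $\max\{C,\,E-R_2-R_{2,r}\}<A-R_1$, with the positivity constraint $R_{1,r}>0$ harmless because $A-R_1>0$. To eliminate $R_{2,r}$, we need
\[
\max\{D-R_2,\;E-R_2-A+R_1,\;0\}<B-R_2 .
\]
The resulting conditions are
\begin{itemize}
\item[(i)] $R_1<A-C$;
\item[(ii)] $D<B$, which is guaranteed by $\mathcal Q_{\rm F}$;
\item[(iii)] $R_1<A+B-E$;
\item[(iv)] $R_2<B$.
\end{itemize}
Strict versions of these hold for interior points. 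Boundary points $R_1=0$ are fine because $C<A$ and $E<A+B$ keep the relevant intervals open and nonempty. Concretely, we choose $R_{2,r}$ slightly below $B-R_2$ and $R_{1,r}$ slightly below $A-R_1$, and then check the remaining three lower bounds.

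\emph{Step 2: pass from Shannon quantities to finite $s$.} Since the five strict inequalities above have some slack $\delta>0$, the convergence $(A_s,\dots,E_s)\to(A,\dots,E)$ as $s\downarrow0$ gives an $s\in(0,1]$ at which the same inequalities hold with slack $\delta/2$. Lemma~\ref{L3T} then bounds both $P_e^n$ and $I(M_1;Z^n|\Co_n)$ by a constant times $e^{-ns\delta/2}\to0$. Hence $(R_1,R_2)$ is achievable. (The rounding of $e^{nR_i}$ is suppressed as in the model.)

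\emph{Step 3: extend to the whole region.} Achievability of the closure follows from a diagonal argument: the achievable region is closed, since rates $R_{i,n}\to R_i$ are allowed. Achievability of the convex hull follows from time sharing. We concatenate two codes of lengths $\lambda n$ and $(1-\lambda)n$ with independent messages. The error probability is subadditive by a union bound. For leakage, the channel is memoryless and the two blocks use independent codewords and messages, so
\[
I(M_1;Z^n)\le I(M_1^{(1)};Z^{(1)})+I(M_1^{(2)};Z^{(2)}).
\]
Both terms vanish, so the concatenated code meets the strong mixed-secrecy criterion.

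The main obstacle is Step~1: we must verify that the three conditions in $\mathcal Q_{\rm F}$, together with the three region constraints, are exactly what make the auxiliary system solvable with strictly positive $R_{1,r},R_{2,r}$. In particular, condition (ii) is not implied by the region inequalities and is needed for $R_{2,r}$ to exist. I would present the elimination explicitly, as the appendix promises.
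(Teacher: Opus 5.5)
Your proposal is correct and follows essentially the same route as the paper. For fixed $P\in\mathcal Q_{\rm F}$ you solve the strict auxiliary-rate system (the paper packages this via the aggregates $x=R_{1,r}$, $y=R_2+R_{2,r}$, with the Fourier--Motzkin elimination in the appendix), use the convergence as $s\downarrow0$ to fix an $s$ for Lemma~\ref{L3T}, and then pass to boundary points, the union, time sharing, and closure; your explicit choice of $R_{1,r},R_{2,r}$ just below $A(P)-R_1$ and $B(P)-R_2$ and your blockwise leakage additivity under time sharing only make explicit details the paper leaves implicit.
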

\begin{IEEEproof}
\textit{Step 1: Exact projection for fixed $P$.}
Fix $P\in\mathcal Q_{\rm F}$. Consider nonnegative auxiliary
rates $R_{1,r}$ and $R_{2,r}$ satisfying the following inequalities:
\begin{align}
R_2+R_{2,r}&<B(P), \Label{G5A}\\
R_1+R_{1,r}&<A(P), \Label{G4A}\\
R_{1,r}&>C(P), \Label{G1}\\
R_2+R_{2,r}&>D(P), \Label{G2}\\
R_{1,r}+R_2+R_{2,r}&>E(P). \Label{G3}
\end{align}
Introduce the aggregate rates $x$ and $y$ by
\begin{equation}
x:=R_{1,r},
\qquad
y:=R_2+R_{2,r}.
\end{equation}
The auxiliary-rate constraints are equivalent to the following system of inequalities:
\begin{align}
C(P)&<x<A(P)-R_1,\notag\\
\max\{D(P),R_2\}&<y<B(P),\notag\\
x+y&>E(P).
\label{eq:NA-xy-system}
\end{align}
For this fixed $P$, such auxiliary rates exist if and only if the following three inequalities hold:
\begin{align}
R_1&<A(P)-C(P),\notag\\
R_1&<A(P)+B(P)-E(P),\notag\\
R_2&<B(P).
\label{eq:NA-fixed-P-strict-region}
\end{align}
Necessity follows directly from \eqref{eq:NA-xy-system}.
Conversely, the strict inequalities in
\eqref{eq:NA-fixed-P-strict-region}, together with
$P\in\mathcal Q_{\rm F}$, allow $x$ and $y$ to be chosen so that
all inequalities in \eqref{eq:NA-xy-system} are strict.
Consequently, the closure of the projection onto the
$(R_1,R_2)$-coordinates is $\mathcal R_{\rm N}(P)$.

\textit{Step 2: Finite-$s$ achievability for fixed $P$.}
Consider a rate pair in the interior of
$\mathcal R_{\rm N}(P)$. By Step~1, the auxiliary rates can be
chosen so that \eqref{G5A}--\eqref{G3} hold with a common
strictly positive slack. Since $P$ is fixed, the continuity
relations in the preceding subsection imply that there exists a
sufficiently small fixed $s\in(0,1]$ such that all of the following inequalities hold:
\begin{align}
R_1+R_{1,r}&<A_s(P),\notag\\
R_2+R_{2,r}&<B_s(P),\notag\\
R_{1,r}&>C_s(P),\notag\\
R_2+R_{2,r}&>D_s(P),\notag\\
R_{1,r}+R_2+R_{2,r}&>E_s(P).
\label{eq:NA-finite-s-system}
\end{align}
Lemma~\ref{L3T} then gives exponentially decreasing decoding
error and information leakage. Hence every interior point of
$\mathcal R_{\rm N}(P)$ is achievable. Its boundary points are
obtained by choosing a sequence of achievable interior rate pairs
converging to the desired point.

\textit{Step 3: Union, time sharing, and closure.}
The code distribution may be chosen arbitrarily from
$\mathcal Q_{\rm F}$. Therefore, every rate pair in
\begin{equation}
\bigcup_{P\in\mathcal Q_{\rm F}}
\mathcal R_{\rm N}(P)
\end{equation}
is achievable. Time sharing among finitely many non-adaptive codes
remains non-adaptive and gives the convex hull of this union.
Finally, a standard diagonal argument gives its closure. Thus every
rate pair in $\mathcal R_{\rm N}$ is achievable.
\end{IEEEproof}

\section{Adaptive coding}\label{sec: Adaptive coding}
In this section, both legitimate users employ adaptive key exchange. We first describe the full symmetric construction to explain its relation to earlier schemes, and then give the one-sided reduced construction. 
The adaptive achievable region is proved using the reduced construction.
We use $t$ for the round index and $T$ for the total number of rounds.  

\begin{figure*}[!ht]
	\centering	
	\begin{tabular}{rcl}		
		$V_1^n:$ &	& 
		$
		\overbrace{
			\overbrace{	
				\begin{tikzpicture}
					\node[minimum height=1.6em, minimum width=10em, anchor=base, fill=blue!25] {$m_{1,\mathrm{sec}}^t$}; 
				\end{tikzpicture}
			}^{nR_{1,\mathrm{sec}}}
			\overbrace{	
				\begin{tikzpicture}
					\node[minimum height=1.6em, minimum width=8em, anchor=base, fill=yellow!25] {$m_{1,k}^t$}; 
				\end{tikzpicture}
			}^{nR_{1,k}}
		}^{n\tilde{R}_1}
		\overbrace{
			\overbrace{
				\begin{tikzpicture}
					\node[minimum height=1.6em, minimum width=2em, anchor=base, fill=red!25] {$m_{1,e}^t\oplus \hat{m}_{2,k}^{t-1}$}; 
				\end{tikzpicture}
			}^{nR_{1,e}}		
			\overbrace{
				\begin{tikzpicture}
					\node[minimum height=1.6em, minimum width=1.5em, anchor=base, fill=red!25] {$m_{1,o}^t$}; 
				\end{tikzpicture}
			}^{nR_{1,o}}
		}^{n\tilde{R}_{1,r}}
		$ \\
		$V_2^n:$ & 	& 
		$
		\underbrace{
			\underbrace{	
				\begin{tikzpicture}
					\node[minimum height=1.6em, minimum width=10em, anchor=base, fill=teal!25] {$m_{2,\mathrm{sec}}^t$}; 
				\end{tikzpicture}
			}_{nR_{2,\mathrm{sec}}}
			\underbrace{	
				\begin{tikzpicture}
					\node[minimum height=1.6em, minimum width=8em, anchor=base, fill=yellow!25] {$m_{2,k}^t$}; 
				\end{tikzpicture}
			}_{nR_{2,k}}
		}_{n\tilde{R}_{2}}
		\underbrace{
			\underbrace{
				\begin{tikzpicture}
					\node[minimum height=1.6em, minimum width=2.5em, anchor=base, fill=red!25] {${m}_{2,e}^t\oplus \hat{m}_{1,k}^{t-1}$}; 
				\end{tikzpicture}
			}_{nR_{2,e}}		
			\underbrace{
				\begin{tikzpicture}
					\node[minimum height=1.6em, minimum width=1.5em, anchor=base, fill=red!25] {$m_{2,o}^t$}; 
				\end{tikzpicture}
			}_{nR_{2,o}}
		}_{n\tilde{R}_{2,r}}
		$ 
	\end{tabular}
	\caption{Encoding for adaptive coding in round $t$.}
	\label{fig: Message structure b}	
\end{figure*}
\subsection{Code construction I}

We use the subscript ``sec'' for the directly protected secret-message component, reserving the symbol $s$ exclusively for the R\'enyi parameter.

The idea of adaptive code as presented in \cite{CH2025} is to operate the non-adaptive coding for several rounds, and in each round the message at each user is split into several parts, which include not only the secret and public parts, but also a key part and an encrypted message part. In more detail, at round $t,$ to send $m_i^t=(m_{i,\mathrm{sec}}^t, m_{i, e}^t),$ User $i$ randomly chooses $m_{i,r}^t = (m_{i,k}^t, m_{i,o}^t),$ where   
\begin{enumerate}
	\item $\tilde{m}_i^t=(m_{i,\mathrm{sec}}^t, m_{i,k}^t)$ are the message parts to be kept secret (at round $t$) that include
	\begin{itemize}
		\item a secret message $m_{i,\mathrm{sec}}^t,$ where $m_{i,\mathrm{sec}}^t\in [1, e^{nR_{i,\mathrm{sec}}}],$
		\item a key $m_{i,k}^t,$ which is to be used for encryption by the other user for the next round, where $m_{i, k}^t\in [1, e^{nR_{i,k}}],$
	\end{itemize}
	\item  $\tilde{m}_{i,r}^t=(m_{i, e}^t\oplus \hat{m}_{i\oplus 1,k}^{t-1}, m_{i,o}^t)$ are the message parts that are not required to be secret that include
	\begin{itemize}
		\item an encrypted message $m_{i, e}^t\oplus \hat{m}_{i\oplus 1, k}^{t-1}$, where the message part $m_{i, e}^t$ is encrypted using one-time pad with the key from the other user from the previous round $\hat{m}_{i\oplus 1,k}^{t-1}$, where $m_{i, e}^t\in [1, e^{nR_{i,e}}]$ and  $\hat{m}_{i\oplus 1,k}^{t-1}\in [1, e^{nR_{i\oplus 1,k}}].$ The encrypted component is protected by a one-time pad using the key generated by the other user in the preceding round, which requires $R_{i,e}\leq R_{i\oplus 1,k}.$ Note that this key may be correlated with Eve's previous observations, as shown in the dependency graph in \cite[Fig. 4]{CH2025}. 
		\item an open message $m_{i,o}^t,$ where $m_{i, o}^t\in [1, e^{nR_{i,o}}].$
	\end{itemize}
\end{enumerate}  
See Fig. \ref{fig: Message structure b}	for an illustration of the code construction.

{\it Codebook Generation:}
At transmitter $i,$ let $V_{i,1}^n,\ldots, V_{i, \sM_i \cdot \sL_i}^n$ 
be random variables independently subject to 
$P_{V_i^n}=\prod_{j=1}^{n}P_{V_{i,j}}, $ where 
$V_i^n=(V_{i,1}, \cdots, V_{i,n}),$  
$\sM_i= e^{n \tilde{R}_i}$ and 
$\sL_i= e^{n \tilde{R}_{i,r}}$ for $i=1,2.$

{\it Encoding:}
At round $t>1,$ when the legitimate user $i$ intends to send the message $m_i^t=(m_{i,\mathrm{sec}}^t, m_{i, e}^t)$, the user randomly chooses $m_{i,k}^t\in [1, e^{nR_{i,k}}]$ and  $m_{i, o}^t\in [1, e^{nR_{i,o}}]$ and sends ${V}^n_{i,(\tilde{m}_i^t, \tilde{m}_{i,r}^t)},$ with $\tilde{m}_i^t=(m_{i,\mathrm{sec}}^t, m_{i,k}^t)$ and $\tilde{m}_{i,r}^t=(m_{i, e}^t\oplus \hat{m}_{i\oplus 1,k}^{t-1}, m_{i,o}^t)$. Here $\hat{m}_{i\oplus 1,k}^{t-1}$ is User~$i$'s decoded estimate of the key generated by the other user in the preceding round. The actual encoder uses this decoded estimate; hats are suppressed only in rate accounting, because the true and decoded key alphabets have the same size. 

The first round initializes the key exchange and carries no actual payload. For each user $i$, let $D_{i,\mathrm{sec}}$ and $D_{i,e}$ be independent uniform dummy coordinates of rates $R_{i,\mathrm{sec}}$ and $R_{i,e}$, respectively. User~$i$ also generates an independent uniform key $m_{i,k}^1$ and an independent uniform open coordinate $D_{i,o}$ of rate $R_{i,o}$. The round-1 indices are
\[
 \tilde m_i^1=(D_{i,\mathrm{sec}},m_{i,k}^1),\qquad
 \tilde m_{i,r}^1=(D_{i,e},D_{i,o}).
\]
Thus the nominal index dimensions and averaging coordinates in round~1 agree with those in rounds $2,\ldots,T$. Both receivers decode all dummy and key coordinates, every such decoding error is included in the block-error event, and the dummy coordinates are subsequently discarded.

{\it Decoding:}
At the other legitimate receiver, an ML decoder will be used to decode ${V}^n_{i,(\tilde{m}_i^t, \tilde{m}_{i,r}^t)}$ (and therefore obtain an estimate of $(\tilde{m}_i^t, \tilde{m}_{i,r}^t)$) by using the receiver's information $X_{i \oplus 1}^n$. Thus $\hat{m}_i^t=(\hat{m}_{i,\mathrm{sec}}^t, \hat{m}_{i, e}^t)$ is obtained by taking $\hat{m}_{i,\mathrm{sec}}^t$ from the estimate of $\tilde{m}_i^t$ and taking $\hat{m}_{i,e}^t$ from decoding the estimate of $\tilde{m}_{i,r}^t$ with $m_{i\oplus 1,k}^{t-1}.$ Note that $m_{i\oplus 1,k}^{t-1}$ is the key coordinate generated in the previous round by the receiver and intended to be protected from the eavesdropper; its quantitative leakage is not inferred from the alphabet-size condition alone. $\hat{m}_{i,k}^{t}$ from the estimate of $\tilde{m}_i^t$ will be decoded as well (and used in the next round encoding).

In this way, it is possible for User $i$ to send the message $m_i$ (including the secret message part and encrypted message part) of rate $R_i$ (except the first round), where the payload rates are given by
\begin{align}
	R_1 =& R_{1,\mathrm{sec}} + R_{1,e}, \Label{G8A-full}\\ 
	R_2 =& R_{2,\mathrm{sec}} + R_{2,e}. \Label{G9A-full}
\end{align}  
using an underlying $(e^{n\tilde{R}_1}, e^{n\tilde{R}_2}, n)$ non-adaptive code, whose rates are given by
\begin{align}
	\tilde{R}_i =& R_{i,\mathrm{sec}} + R_{i,k}, \Label{G11A}\\
	\tilde{R}_{i,r} =& R_{i,o} + R_{i,e}. \Label{G12A}
\end{align}
Note that $R_{i,\mathrm{sec}}$ is the rate of the secret message part; $R_{i,k}$ is the rate of the key part; $R_{i,o}$ is the rate of the partial messages that could be public; and $R_{i,e}$ is the rate of encrypted message part. 
The following rate-matching conditions ensure that each encrypted component can be embedded into the available key alphabet: 
\begin{align}
	R_{1,e} \leq & R_{2,k}, \Label{G6A-full}\\
	R_{2,e} \leq & R_{1,k}. \Label{G7A-full}
\end{align}


Fix an integer $T\geq2$. Round~1 is an initialization round and carries no actual payload, whereas rounds $2,\ldots,T$ carry payload at rates $(R_1,R_2)$. Thus the total blocklength is $nT$, the payload sizes are $e^{n(T-1)R_i}$, and the effective rate of User~$i$ is $(T-1)R_i/T$, which converges to $R_i$ as $T\to\infty$. The alphabet-size conditions alone do not imply perfect secrecy, because a key generated in a preceding round can be correlated with Eve's past observations; the multiround analysis below accounts for this dependence. 

%

\subsection{Exponential evaluation for code construction I}
The full construction inherits a multiround reliability and secrecy analysis from \cite{CH2025}. 
For comparison with the reduced construction, we state the following one-sided specialization. The proof of the adaptive achievable-region theorem below is based instead on the reduced construction.
It is stated in terms of the actual payload transmitted in rounds $2,\ldots,T$.
\begin{lemma}[Full-construction multiround bound]\Label{L3TA-FA}
For every fixed $P\in\mathcal Q$, every fixed $s\in(0,1]$, an integer $T\geq2$, and nonnegative splitting rates satisfying the rate-matching conditions \eqref{G6A-full}--\eqref{G7A-full}, there exists a full adaptive code $\Co_n^T$ of blocklength $nT$ and payload sizes $e^{n(T-1)R_1}$ and $e^{n(T-1)R_2}$, where the payload rates are
\[
 R_1=R_{1,\mathrm{sec}}+R_{1,e},\qquad R_2=R_{2,\mathrm{sec}}+R_{2,e},
\]
and the underlying code rates are
\[
 \widetilde R_i=R_{i,\mathrm{sec}}+R_{i,k},\qquad \widetilde R_{i,r}=R_{i,e}+R_{i,o},
\]
The code satisfies the following reliability and leakage bounds:
\begin{align}
 P_e^{nT}(\Co_n^T)
 &\leq2T\Big[
 e^{ns(\widetilde R_2+\widetilde R_{2,r}-B_s(P))}\notag\\
 &\hspace{14mm}+
 e^{ns(\widetilde R_1+\widetilde R_{1,r}-A_s(P))}
 \Big], \Label{eq:ReliabilityFA}\\
 I(\mathbf M_1^{2:T};Z^{nT}\mid\Co_n^T)
 &\leq2T\Big[
 e^{ns(E_s(P)-\widetilde R_{1,r}-\widetilde R_{2,r}-R_{2,\mathrm{sec}})}\notag\\
 &\hspace{14mm}+e^{ns(C_s(P)-\widetilde R_{1,r})}\notag\\
 &\hspace{14mm}+e^{ns(D_s(P)-\widetilde R_{2,r}-R_{2,\mathrm{sec}})}
 \Big]. \Label{eq:SSecFA}
\end{align}
Here $\mathbf M_1^{2:T}=(M_1^2,\ldots,M_1^T)$, where $M_1^t=(M_{1,\mathrm{sec}}^t,M_{1,e}^t)$, and $Z^{nT}=(Z^n[1],\ldots,Z^n[T])$.
\end{lemma}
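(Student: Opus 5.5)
The plan is to run the full construction as $T$ concatenated uses of the underlying $(e^{n\widetilde R_1},e^{n\widetilde R_2},n)$ non-adaptive code of Lemma~\ref{L3T}. Within one round, User~$i$'s codeword index is split as the ``message'' $\tilde m_i^t=(m_{i,\mathrm{sec}}^t,m_{i,k}^t)$ and the ``randomization'' $\tilde m_{i,r}^t=(m_{i,e}^t\oplus\hat m_{i\oplus1,k}^{t-1},m_{i,o}^t)$. We generate one random codebook pair and use it in every round, and average over this ensemble. A single code satisfying both bounds is then extracted at the end by Markov's inequality, exactly as in the proof of Lemma~\ref{L3T}, at the cost of the factor $2$.

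\emph{Reliability.} The block error event is the union over $t$ of the round-$t$ decoding error events of both codewords, including the round-1 dummy and key coordinates. Conditioned on correct decoding of the previous keys, the encrypted coordinate is a bijective relabeling of a uniform index. Hence each round is an instance of the non-adaptive code with total rates $\widetilde R_i+\widetilde R_{i,r}$. The expectation bound underlying \eqref{eq:Reliability2D} then gives a per-round error at most $e^{ns(\widetilde R_1+\widetilde R_{1,r}-A_s)}+e^{ns(\widetilde R_2+\widetilde R_{2,r}-B_s)}$. A union bound over the $T$ rounds yields \eqref{eq:ReliabilityFA}.

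\emph{Secrecy.} By the chain rule,
$I(\mathbf M_1^{2:T};Z^{nT})\le\sum_{t}I(M_1^t,M_{1,k}^t;Z^n[t]\mid Z^{n[1:t-1]},\text{past})+\dots$.
I would follow the dependency-graph argument of \cite{CH2025} and show that the total leakage is bounded by a sum over rounds of the one-round leakage of the secret-kept index $\tilde m_1^t$, i.e.\ the pair (secret part, key part). The one-round leakage is controlled by the User~1 part of \cite[Lemma~4]{HC2023}. The randomization available against Eve for $V_1$ is the whole of $\tilde m_{1,r}^t$, of rate $\widetilde R_{1,r}$. The encrypted coordinate is uniform given Eve's view as long as the key $m_{2,k}^{t-1}$ is nearly independent of that view. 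For $V_2$, since $M_2$ is unconstrained, $m_{2,\mathrm{sec}}^t$ together with $\tilde m_{2,r}^t$ act as randomization, of rate $\widetilde R_{2,r}+R_{2,\mathrm{sec}}$. The key $m_{2,k}^t$ is deliberately excluded from this randomization because it is reused as a pad. This gives the three exponents in \eqref{eq:SSecFA}, with $C_s,D_s,E_s$ in the roles of \eqref{eq:SSec2D}.

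The main obstacle is the one-time-pad step. The key $m_{2,k}^{t-1}$ used in round $t$ is only approximately independent of $Z^n[t-1]$, so the pad leaks, and this imperfection must be propagated quantitatively rather than assumed away. I would prove a one-time-pad inequality with side information:
\[
I(M\oplus K,\;\text{rest};Z)\le I(K;Z)
\]
for uniform $M$ independent of $(K,Z)$. I would then telescope it, charging the leakage of $m_{2,k}^{t-1}$ to the round-$(t-1)$ term. Each round term is counted at most twice, once as its own leakage and once as the key leakage it passes to round $t+1$. This produces the factor $T$ (absorbed into $2T$ together with the Markov factor). A secondary technical point is that the key index $m_{2,k}^{t-1}$ must be treated as part of the protected index of $V_2$ in the leakage lemma. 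This is why the $D_s$ and $E_s$ exponents include $R_{2,\mathrm{sec}}$ but not $R_{2,k}$. I would verify that the selected-subindex form of the resolvability bound in \cite{HC2023} permits exactly this split.
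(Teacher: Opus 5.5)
Your outline reconstructs the argument that the paper delegates to \cite{CH2025}, and which it carries out itself for the reduced construction (Lemma~\ref{L3TA}). The ingredients match: per-round selected-subindex resolvability with $(m_{1,\mathrm{sec}}^t,m_{1,k}^t)$ and $m_{2,k}^t$ retained and $m_{2,\mathrm{sec}}^t$ averaged, a one-time-pad step, a union bound, and Markov's inequality. Reusing one codebook in all rounds is harmless, because each step of the recursion holds for a fixed codebook.

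\textbf{Main gap: decoded keys.} You never distinguish the pad User~1 actually uses, its decoded estimate $\hat m_{2,k}^{t-1}$, from the true key $m_{2,k}^{t-1}$. Resolvability controls only the leakage of the codeword index $m_{2,k}^{t-1}$. The estimate is a function of $(X_1^n[t-1],Y_1^n[t-1])$. It need not be uniform, it is correlated with $Z^n[t-1]$ through the channel and not only through the index, and for a fixed codebook it can depend on $M_1^{t-1}$. So the hypotheses of your one-time-pad step fail for the actual code, and your recursion bounds only the leakage of the coupled ideal-key process. A transfer step is needed, e.g.\ the coupling and entropy-continuity argument of Lemma~\ref{lem:ideal-real-transfer}; the two processes differ only on the key-decoding-error event. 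Done that way, it adds $2\eta_{n,T}$, which does not appear in \eqref{eq:SSecFA}. To reach the bound exactly as stated you must rely on the treatment of decoded keys in \cite{CH2025}, as the paper's proof does. Only User~1's pad matters here: User~2's ciphertext $m_{2,e}^t\oplus\hat m_{1,k}^{t-1}$ is uniform and independent of everything else, because $m_{2,e}^t$ is fresh and unprotected. Likewise, your reliability step needs no conditioning on correct key decoding. The ciphertext is uniform regardless, and a wrong pad is already covered by the union of previous-round decoding errors.

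\textbf{Organization of the recursion.} It also has to be set up differently from your sketch.
\begin{itemize}
\item The inequality you display bounds the wrong quantity; read literally, its left side is zero, since $M\oplus K$ is independent of $(K,Z)$. What is needed is $I(U;U\oplus K,S\mid W)\le I(K;S\mid W)$ for $U$ uniform and independent of $(K,S,W)$, with $K\perp W$. Apply it with $U=M_{1,e}^t$, side information $S=Z^n[1:t-1]$, and $W$ consisting of $\mathbf M_1^{2:t-1}$ and the fresh retained indices. The current block $Z^n[t]$ is first removed by the Markov chain through the ciphertext and the fresh indices.
\item Key leakage does not affect the resolvability step. The ciphertext is exactly uniform and independent of the history because $M_{1,e}^t$ is fresh; key leakage enters only through this bound on $M_{1,e}^t$.
\item The quantity to propagate is the joint leakage $L_t=\mathbb E\,I(\mathbf M_1^{2:t},m_{2,k}^t;Z^n[1:t]\mid\mathsf C)$. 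The chain rule splits $L_t$ into a fresh term, bounded by one resolvability application without conditioning on the past payload, plus
$I(\mathbf M_1^{2:t-1};Z^n[1:t-1])+I(m_{2,k}^{t-1};Z^n[1:t-1]\mid\mathbf M_1^{2:t-1})=I(\mathbf M_1^{2:t-1},m_{2,k}^{t-1};Z^n[1:t-1])\le L_{t-1}$.
Hence $L_t\le L_{t-1}+\delta_n$. Each round is counted once, and Markov turns $T\delta_n$ into the stated $2T$. Your ``counted at most twice'' accounting would give $4T$.
\item Charging the key leakage to a separate round-$(t-1)$ term would require bounding $I(m_{2,k}^{t-1};Z^n[t-1]\mid Z^n[1:t-2],\mathbf M_1^{2:t-1})$. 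Conditioning on $M_{1,e}^{t-1}$ together with the past outputs destroys the uniformity of the round-$(t-1)$ ciphertext, so the one-block lemma does not apply to that term.
\item The key to carry forward is $m_{2,k}^t$, not $M_{1,k}^t$ as in your displayed chain rule.
\end{itemize}
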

\begin{IEEEproof}
The claim follows by specializing the multiround reliability and secrecy analysis of \cite[Secs.~5.3--5.4]{CH2025} to the leakage of User~1's payload. Under this specialization, User~1's secret and encrypted payload components are retained in the leakage criterion, whereas the message coordinates of User~2 that are not required to be secret contribute to the averaging in the resolvability analysis. The analysis in \cite{CH2025} treats the dependence between preceding-round keys and Eve's past observations and the use of decoded preceding-round keys by the actual encoders. After dropping the secrecy requirement on User~2's payload and using data processing to discard the round-1 dummy coordinates, its ensemble bounds reduce to \eqref{eq:ReliabilityFA} and \eqref{eq:SSecFA}. Applying Markov's inequality to the sum of the normalized error and leakage quantities yields one deterministic code satisfying both displayed bounds.
\end{IEEEproof}

\subsection{Code construction II: one-sided reduced construction}
Although code construction~I is operationally achievable by the multiround analysis inherited from \cite{CH2025}, its symmetric rate splitting contains components that are unnecessary under one-sided secrecy. Since no secrecy constraint is imposed on User~2, we impose
\begin{equation}\label{eq:one-sided-reduction}
	R_{2,e}=R_{1,k}=0,\qquad R_{2,\mathrm{sec}}=R_2.
\end{equation}
We directly analyze the resulting reduced construction. This gives a self-contained leakage recursion tailored to one-sided secrecy and explicitly shows how User~2's unprotected payload contributes to the averaging that protects User~1. At the level of the auxiliary-rate systems, moving User~2's encrypted component into its unprotected main component preserves both its payload rate and its contribution to the averaging, while eliminating the key that would otherwise be generated by User~1. Appendices~\ref{APP: FM-Full-Adaptive} and~\ref{APP: FM-OS-Adaptive} verify that the full and reduced systems have the same projected region.
See Fig. \ref{fig: Message structure c}	for an illustration of this code construction.
\begin{figure*}[h]
	\centering	
	\begin{tabular}{rcl}		
		$V_1^n:$ &	& 
		$
		\overbrace{
			\overbrace{	
				\begin{tikzpicture}
					\node[minimum height=1.6em, minimum width=10em, anchor=base, fill=blue!25] {$m_{1,\mathrm{sec}}^t$}; 
				\end{tikzpicture}
			}^{nR_{1,\mathrm{sec}}}
		}^{n\tilde{R}_1}
		\overbrace{
			\overbrace{
				\begin{tikzpicture}
					\node[minimum height=1.6em, minimum width=2em, anchor=base, fill=red!25] {$m_{1,e}^t\oplus \hat{m}_{2,k}^{t-1}$}; 
				\end{tikzpicture}
			}^{nR_{1,e}}		
			\overbrace{
				\begin{tikzpicture}
					\node[minimum height=1.6em, minimum width=1.5em, anchor=base, fill=red!25] {$m_{1,o}^t$}; 
				\end{tikzpicture}
			}^{nR_{1,o}}
		}^{n\tilde{R}_{1,r}}
		$ \\
		$V_2^n:$ & 	& 
		$
		\underbrace{
			\underbrace{	
				\begin{tikzpicture}
					\node[minimum height=1.6em, minimum width=10em, anchor=base, fill=teal!25] {$m_{2,\mathrm{sec}}^t$}; 
				\end{tikzpicture}
			}_{nR_{2}}
			\underbrace{	
				\begin{tikzpicture}
					\node[minimum height=1.6em, minimum width=6em, anchor=base, fill=yellow!25] {$m_{2,k}^t$}; 
				\end{tikzpicture}
			}_{nR_{2,k}}
		}_{n\tilde{R}_{2}}
		\underbrace{	
			\underbrace{
				\begin{tikzpicture}
					\node[minimum height=1.6em, minimum width=1.5em, anchor=base, fill=red!25] {$m_{2,o}^t$}; 
				\end{tikzpicture}
			}_{nR_{2,o}}
		}_{n\tilde{R}_{2,r}}
		$ 
	\end{tabular}
	\caption{Encoding for the one-sided reduced adaptive construction in round $t$.}
	\label{fig: Message structure c}	
\end{figure*}

An independent codebook is generated for every round. In each round, the nominal codeword indices are encoded by the underlying non-adaptive encoder, including its memoryless stochastic prefix channel $P_{X_i|V_i}$. 
More specifically, at round $t\geq2$, User~1 splits its payload into a directly protected component $M_{1,\mathrm{sec}}^t$ of rate $R_{1,\mathrm{sec}}$ and an encrypted component $M_{1,e}^t$ of rate $R_{1,e}$. To make the finite alphabets precise, choose finite abelian groups $G_n$ and $H_n$ such that $n^{-1}\log|G_n|\to R_{1,e}$ and $n^{-1}\log|G_n\times H_n|\to R_{2,k}$. User~2 sends its entire payload $M_{2,t}$ at rate $R_2$ and a fresh uniform key $K_t^{\rm full}$ on $G_n\times H_n$. Let $K_t$ be the first component of $K_t^{\rm full}$, i.e., the projection onto $G_n$, and require
\begin{equation}\label{G6A}
 R_{1,e}\leq R_{2,k}.
\end{equation}
The rate condition matches the encrypted component to an available key alphabet; secrecy in the presence of Eve's side information is quantified below rather than asserted to be perfect. Let $\widehat K_{t-1}$ denote User~1's estimate of the key generated by User~2 in round $t-1$. The actual adaptive encoder forms
\begin{equation}
C_t^{\rm real}:=M_{1,e}^t\oplus \widehat K_{t-1}.
\end{equation}
For the leakage recursion we use the coupled ideal-key process $C_t:=M_{1,e}^t\oplus K_{t-1}$. The two processes coincide whenever all preceding key indices have been decoded correctly. Lemma~\ref{lem:ideal-real-transfer} transfers the ideal-process leakage bound to the actual adaptive code.

To apply the underlying non-adaptive encoder in round $t$, each user combines its message, ciphertext, key, and open-randomization variables into two encoder indices. User~1 supplies $(\widetilde M_{1}^t,\widetilde M_{1,r}^t)$, and User~2 supplies $(\widetilde M_{2}^t,\widetilde M_{2,r}^t)$, where the independent open indices $M_{1,o}^t$ and $M_{2,o}^t$ have rates $R_{1,o}$ and $R_{2,o}$, respectively:
\begin{align}
 \widetilde M_{1}^t&=M_{1,\mathrm{sec}}^t,&
 \widetilde M_{1,r}^t&=(C_t^{\rm real}, M_{1,o}^t),\notag\\
 \widetilde M_{2}^t&=(M_{2,t},K_t^{\rm full}),&
 \widetilde M_{2,r}^t&=M_{2,o}^t.\label{eq:reduced-indices}
\end{align}
Here $\widetilde M_{i}^t$ is the main codebook index and $\widetilde M_{i,r}^t$ is the randomization index for User~$i$. Their rates satisfy the following identities:
\begin{align}
 R_1&=R_{1,\mathrm{sec}}+R_{1,e},& R_2&=R_{2,\mathrm{sec}},\label{G8A}\\
 \widetilde R_1&=R_{1,\mathrm{sec}},&
 \widetilde R_{1,r}&=R_{1,e}+R_{1,o},\notag\\
 \widetilde R_2&=R_2+R_{2,k},&
 \widetilde R_{2,r}&=R_{2,o}.\label{eq:reduced-rates}
\end{align}
Each receiver applies the same maximum-likelihood decoder as in the underlying non-adaptive code and recovers all nominal indices of the other user. User~2 decrypts $C_t^{\rm real}$ using its previously generated key $K_{t-1}$; on the event of correct preceding-round key decoding, $C_t^{\rm real}=C_t$.

The first round initializes the key exchange and carries no payload. User~1 generates independent uniform dummy coordinates $D_{1,\mathrm{sec}}$ and $D_{1,e}$ of rates $R_{1,\mathrm{sec}}$ and $R_{1,e}$, together with the independent open index $O_{1,1}$ of rate $R_{1,o}$. User~2 generates an independent uniform dummy coordinate $D_2$ of rate $R_2$, the initial full key $K_1^{\rm full}$, and the independent open index $O_{2,1}$ of rate $R_{2,o}$. The round-1 encoder indices are $(D_{1,\mathrm{sec}},(D_{1,e},O_{1,1}))$ and $((D_2,K_1^{\rm full}),O_{2,1})$, so their nominal dimensions agree exactly with \eqref{eq:reduced-indices}. Both receivers decode all dummy, key, and open coordinates, and every such decoding error is included in the block-error event. The dummy coordinates are subsequently discarded; data processing then leaves only the leakage of $K_1^{\rm full}$ needed to start the recursion.

For $T\geq2$ rounds, the total blocklength is $nT$, whereas rounds $2,\ldots,T$ carry payload. Hence the payload sizes are $e^{n(T-1)R_i}$ and the effective rates are $(T-1)R_i/T$.

\subsection{Exponential evaluation for code construction II}
We first isolate the one-block statement needed for the reduced construction. It is the selected-subindex specialization of the individual-leakage argument in \cite[Lemma~4 and Sec.~III-E]{HC2023}.

\begin{lemma}[Selected-subindex resolvability]\label{lem:selected-resolvability}
Fix $P\in\mathcal Q$. For one block, write User~1's retained main index as $A_1$, its averaged auxiliary index as $B_1$, User~2's retained key index as $A_2$, its unprotected main coordinate as $J_2$, and its averaged auxiliary index as $B_2$. Suppose that these coordinates are mutually independent and uniform before they are mapped to codewords, and that the current codebook is generated independently of them and of any variables from preceding blocks. Then, for every fixed $s\in(0,1]$,
\begin{align}
&\mathbb E_{\mathsf C}
 I(A_1,A_2;Z^n\mid\mathsf C)\notag\\
&\quad\leq
 e^{ns(E_s(P)-R_{B_1}-R_{J_2}-R_{B_2})}\notag\\
&\qquad+e^{ns(C_s(P)-R_{B_1})}
+e^{ns(D_s(P)-R_{J_2}-R_{B_2})}.
\label{eq:selected-resolvability}
\end{align}
Here ``retained'' means that the index appears on the left-hand side of the resolvability leakage bound, whether or not it is an ultimate payload. In particular, $A_2$ is the key temporarily tracked for use in the next round, whereas the unprotected payload coordinate $J_2$ contributes to averaging. An averaged index is mixed over in the resolvability argument.
\end{lemma}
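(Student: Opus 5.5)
The proof reduces the statement to the User~1 leakage bound of \cite[Lemma~4]{HC2023}, already used in Lemma~\ref{L3T}, by a relabeling of indices. We regard User~1's codebook as a non-adaptive codebook with message index $A_1$ and randomization index $B_1$, of rates $R_{A_1}$ and $R_{B_1}$. We regard User~2's codebook as one with ``message'' $A_2$ and ``randomization'' $(J_2,B_2)$ of rate $R_{J_2}+R_{B_2}$. Since the codewords are generated i.i.d.\ according to $P_{V_i^n}$ and indexed by the concatenated tuple, any ordering of the coordinates gives the same random codebook ensemble. The claimed inequality is exactly \eqref{eq:SSec2D} before Markov's inequality, that is, the expectation bounds \cite[(50),(52)]{HC2023}. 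The difference is that the left-hand side is now the joint leakage of $(A_1,A_2)$ rather than of $M_1$ alone, and User~2's retained index is $A_2$ instead of nothing.

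\emph{Step 1: leakage as a divergence.} Using independence and uniformity of the coordinates, write
\[
I(A_1,A_2;Z^n\mid \mathsf C)\le \mathbb E_{A_1,A_2} D\bigl(P_{Z^n\mid A_1A_2,\mathsf C}\,\big\|\,P_{Z}^{\otimes n}\bigr),
\]
where $P_{Z^n\mid a_1a_2,\mathsf C}$ is the output distribution obtained by averaging the channel $P_{Z|V_1V_2}$ (with the prefix channels $P_{X_i|V_i}$ absorbed) over the $e^{nR_{B_1}}\cdot e^{n(R_{J_2}+R_{B_2})}$ codeword pairs indexed by $(b_1,j_2,b_2)$. Dropping the optimization over the reference measure only increases the right-hand side.

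\emph{Step 2: resolvability bound for two averaged codebooks.} For fixed $(a_1,a_2)$ we bound $D\le \frac1s\log \sum P^{1+s}Q^{-s}$ and use the standard concavity/Jensen trick for $\mathbb E_{\mathsf C}$ with $s\in(0,1]$. Expanding the $(1+s)$-power over pairs of index tuples $(b_1,j_2,b_2)$ and $(b_1',j_2',b_2')$, we split into four cases according to whether $b_1=b_1'$ and whether $(j_2,b_2)=(j_2',b_2')$.
\begin{itemize}
\item Both indices differ: independent codewords, giving the main term $1$.
\item Only User~2's index coincides: this yields $e^{ns(D_s(P)-R_{J_2}-R_{B_2})}$, with the Rényi quantity $I^{\uparrow}_{1+s}(Z;V_2)$.
\item Only User~1's index coincides: this yields $e^{ns(C_s(P)-R_{B_1})}$.
\item Both coincide: this yields $e^{ns(E_s(P)-R_{B_1}-R_{J_2}-R_{B_2})}$.
\end{itemize}
Finally, $\frac1s\log(1+x)\le x/s$ and the inequality $a^{s}+b^{s}\ge (a+b)^s$ give the three-term sum. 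The independence assumptions on the current codebook with respect to earlier blocks ensure that conditioning on past variables does not alter this ensemble average.

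\emph{Main obstacle.} The main difficulty is the case split in Step 2. Because $A_2$ is retained while $J_2$ is averaged, we must check that the bound depends only on the averaged rates and not on $R_{A_1},R_{A_2}$. It does, since for each fixed $(a_1,a_2)$ the relevant sub-codebooks are i.i.d.\ of sizes $e^{nR_{B_1}}$ and $e^{n(R_{J_2}+R_{B_2})}$, and the outer expectation over $(a_1,a_2)$ is an average of identical bounds. The remaining work is to match the constant. The claimed bound has constant $1$ rather than the factor $2$ of Lemma~\ref{L3T}, which arose from Markov's inequality; the bound is therefore precisely the expectation bound \cite[(52)]{HC2023} under this relabeling.
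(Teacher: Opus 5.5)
Your route is the paper's route. The appendix fixes $(a_1,a_2)$ and treats $P_{Z^n\mid a_1,a_2,\mathsf C}$ as a uniform mixture over $(b_1,j_2,b_2)$. It then invokes the two-codebook resolvability bound \cite[Lemma~2]{HC2023} with effective randomization sizes $e^{nR_{B_1}}$ and $e^{n(R_{J_2}+R_{B_2})}$, which is what your four-case split re-derives. Finally it uses the divergence decomposition to pass from the averaged divergence to $I(A_1,A_2;Z^n\mid\mathsf C)$, which is your Step~1. Your remark that only the averaged rates enter matches the paper's use of effective randomization sizes.

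\textbf{Gap: the constant.} Your last step, $\tfrac1s\log(1+x)\le x/s$, leaves a prefactor. What your argument actually proves is
\[
\mathbb E_{\mathsf C}I(A_1,A_2;Z^n\mid\mathsf C)\le\frac1s\Bigl[e^{ns(E_s(P)-R_{B_1}-R_{J_2}-R_{B_2})}+e^{ns(C_s(P)-R_{B_1})}+e^{ns(D_s(P)-R_{J_2}-R_{B_2})}\Bigr],
\]
not the three-term sum with constant $1$. Your closing claim that the constant matches exactly is therefore unsupported: you accounted for the Markov factor $2$ of Lemma~\ref{L3T} but not for the $1/s$ your own Step~2 produces.

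This cannot be repaired for $s<1$. Take the following instance:
\begin{itemize}
\item $n=1$;
\item $Z=X_1=V_1$ uniform on $k\approx e^{10}$ letters, so $Z$ does not depend on $X_2$;
\item $|B_1|=|A_2|=1$, $\log(|J_2||B_2|)=100$, and $|A_1|=55$.
\end{itemize}
Then $C_s(P)=E_s(P)=\log k$ and $D_s(P)=0$. The left-hand side is the expected entropy of the empirical distribution of $55$ i.i.d.\ uniform letters. Collisions are rare since $55^2\ll k$, so this is about $\log 55\approx 4.0$. At $s=0.1$ the right-hand side of \eqref{eq:selected-resolvability} is $e^{1}+e^{-9}+e^{-10}\approx 2.72$. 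So the inequality as displayed needs the factor $1/s$, which is exactly what your derivation gives, and you should state it that way rather than try to match constant $1$.

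For the paper's use of the lemma this is harmless. Here $s$ is fixed before $n\to\infty$, so the factor only rescales $\delta_n$ and the leakage bounds in Lemmas~\ref{L3T} and~\ref{L3TA} by a constant. No exponent and no achievability conclusion changes.
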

\begin{IEEEproof}
The proof is given in Appendix~\ref{APP: Proof of Lemma Selected-Resolvability}.	
\end{IEEEproof}

\begin{lemma}[One-time pad with side information]\label{lem:otp-side-information}
Let $U$ and $K$ be uniform on the same finite abelian group $G$,
and assume that $U$ is independent of $(K,W,S)$ and that $K$ is
independent of $W$. Define the ciphertext by $C:=U\oplus K$. Then
\begin{equation}\label{eq:otp-side-information}
I(U;C,S\mid W)\leq I(K;S\mid W).
\end{equation}
If a larger uniform key $K^{\rm full}$ is available, the statement
remains valid with $K=\pi(K^{\rm full})$ for any fixed surjective
homomorphism $\pi$ onto $G$, and
$I(K;S\mid W)\leq I(K^{\rm full};S\mid W)$.
\end{lemma}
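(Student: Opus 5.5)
The plan is to use the chain rule together with the independence assumptions, and then a one-time-pad symmetry argument.

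\emph{Reduction to the ciphertext term.} Write
\[
I(U;C,S\mid W)=I(U;S\mid W)+I(U;C\mid S,W).
\]
Since $U$ is independent of $(K,W,S)$, it is in particular independent of $(S,W)$. Hence $I(U;S\mid W)=0$, and it remains to show $I(U;C\mid S,W)\le I(K;S\mid W)$.

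\emph{Bounding the ciphertext term.} Use $I(U;C\mid S,W)=H(C\mid S,W)-H(C\mid U,S,W)$.
\begin{itemize}
\item For the first entropy, $C$ takes values in $G$, so $H(C\mid S,W)\le\log|G|$.
\item For the second entropy, given $(U,S,W)$ the map $K\mapsto U\oplus K$ is a bijection of $G$. Therefore $H(C\mid U,S,W)=H(K\mid U,S,W)$.
\item Since $U$ is independent of $(K,S,W)$, we get $H(K\mid U,S,W)=H(K\mid S,W)$.
\item Since $K$ is uniform on $G$ and independent of $W$, we have $H(K\mid W)=\log|G|$.
\end{itemize}
Combining these gives
\[
I(U;C\mid S,W)\le \log|G|-H(K\mid S,W)=H(K\mid W)-H(K\mid S,W)=I(K;S\mid W).
\]
This proves \eqref{eq:otp-side-information}. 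The key point is that uniformity of $K$ given $W$ lets $\log|G|$ be identified with $H(K\mid W)$; without $K$ being independent of $W$, this step fails. I expect this identification to be the only delicate step.

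\emph{The extension to a larger key.} Let $K=\pi(K^{\rm full})$ with $\pi$ a fixed surjective homomorphism onto $G$. All fibers of $\pi$ are cosets of $\ker\pi$ and so have equal size. Hence $K$ is uniform on $G$ whenever $K^{\rm full}$ is uniform.

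The hypotheses transfer as follows. Independence of $K^{\rm full}$ from $W$ gives independence of $K$ from $W$, since $K$ is a function of $K^{\rm full}$. Likewise, if $U$ is independent of $(K^{\rm full},W,S)$, then it is independent of $(K,W,S)$. Thus the first part applies with $K=\pi(K^{\rm full})$.

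Finally, $K-K^{\rm full}-(S,W)$ forms a Markov chain, since $K$ is a deterministic function of $K^{\rm full}$. Conditional data processing then yields
\[
I(K;S\mid W)\le I(K^{\rm full};S\mid W).
\]
Equivalently, by the chain rule, $I(K^{\rm full};S\mid W)=I(K;S\mid W)+I(K^{\rm full};S\mid K,W)\ge I(K;S\mid W)$.
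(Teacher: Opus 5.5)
Your proof is correct and follows essentially the paper's argument: the paper likewise drops $I(U;S\mid W)$ by independence, replaces $H(C\mid U,S,W)$ with $H(K\mid S,W)$, bounds $H(C\mid S,W)\le\log|G|$, and identifies $\log|G|$ with $H(K\mid W)$, the only cosmetic difference being that it argues pointwise for each $W=w$ and then averages. Your treatment of the larger key (equal-size coset fibers of $\pi$ giving uniformity, transfer of the independence hypotheses, and data processing) matches the paper's one-line justification, with more detail supplied.
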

\begin{IEEEproof}
The proof is given in Appendix~\ref{APP: Proof of Lemma Otp-side-information}.
\end{IEEEproof}

\begin{lemma}[Ideal-key to actual-code transfer]\label{lem:ideal-real-transfer}
Fix $P\in\mathcal Q$ and a codebook collection $\boldsymbol{\mathsf C}=\boldsymbol c$, let $F_{n,T}$ be the event that at least one key used by an encoder in rounds $2,\ldots,T$ was decoded incorrectly in the preceding round, and define
\[
p_{n,T}(\boldsymbol c):=\Pr\{F_{n,T}\mid\boldsymbol{\mathsf C}=\boldsymbol c\}.
\] Couple the actual adaptive code, which uses $\widehat K_{t-1}$, and the ideal-key process, which uses $K_{t-1}$, with the same messages, keys, codebooks, and channel randomness until their first discrepancy. With total variation defined as one half of the $\ell_1$ distance, the two conditional laws of $(\mathbf M_1^{2:T},Z^{nT})$ are then at total variation distance at most $p_{n,T}(\boldsymbol c)$.

Let $\bar p_{n,T}:=\mathbb E_{\boldsymbol{\mathsf C}}p_{n,T}(\boldsymbol{\mathsf C})$ and assume $\bar p_{n,T}\leq1/2$. Then
\begin{align}
 \mathbb E_{\boldsymbol{\mathsf C}}
 I_{\rm real}(\mathbf M_1^{2:T};Z^{nT}\mid\boldsymbol{\mathsf C}) &\leq
 \mathbb E_{\boldsymbol{\mathsf C}}
 I_{\rm ideal}(\mathbf M_1^{2:T};Z^{nT}\mid\boldsymbol{\mathsf C})
 +\eta_{n,T},\label{eq:ideal-real-transfer}\\
 \eta_{n,T}&:=
 4\bar p_{n,T}\log|\mathcal{M}_1^{T-1}|
 +2h_2(\bar p_{n,T}).
 \label{eq:eta-definition}
\end{align}
where $\mathbf M_1^{2:T}:=(M_1^2,\ldots,M_1^T)$ and $Z^{nT}:=(Z^n[1],\ldots,Z^n[T])$. The actual and ideal processes have the same uniform marginal distribution on $\mathbf M_1^{2:T}$.
Moreover, the union bound and the one-block decoding estimate give
\begin{align}
\bar p_{n,T}
\leq T\Big[e^{ns(\widetilde R_2+\widetilde R_{2,r}-B_s(P))}
+e^{ns(\widetilde R_1+\widetilde R_{1,r}-A_s(P))}
\Big].
\label{eq:pbar-bound}
\end{align}
Thus $\eta_{n,T}$ decreases exponentially in $n$ for fixed $T$ under the strict reliability inequalities.
\end{lemma}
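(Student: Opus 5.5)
The proof has three parts: a coupling bound, a continuity estimate for mutual information, and a union bound on key-decoding errors.

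\emph{Coupling.} For a fixed codebook collection $\boldsymbol c$, I construct both processes on one probability space. They share the payloads $\mathbf M_1^{2:T}$, the keys $K_t^{\rm full}$, the open and dummy indices, the stochastic prefix channels and the channel noise. Up to the first round in which User~1 uses $\widehat K_{t-1}\neq K_{t-1}$, the ciphertexts agree, so the inputs, outputs and all subsequent decoded estimates agree as well. The two joint laws of $(\mathbf M_1^{2:T},Z^{nT})$ can therefore differ only on $F_{n,T}$. By the coupling inequality, their total variation distance is at most $\Pr\{F_{n,T}\mid\boldsymbol c\}=p_{n,T}(\boldsymbol c)$.

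The payload marginal is unaffected by the choice of key. In both processes $\mathbf M_1^{2:T}$ is generated as an independent uniform variable before any encoding, so its marginal is the same uniform law.

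\emph{Continuity.} Write $I(\mathbf M;Z)=H(\mathbf M)-H(\mathbf M\mid Z)$. Since the marginals of $\mathbf M$ coincide, it suffices to bound $|H_{\rm real}(\mathbf M\mid Z)-H_{\rm ideal}(\mathbf M\mid Z)|$. I would use the Alicki--Fannes--Winter continuity bound for conditional entropy, which for total variation $\epsilon$ gives
\[
|H(\mathbf M\mid Z)-H'(\mathbf M\mid Z)|\leq 2\epsilon\log|\mathcal M_1^{T-1}|+(1+\epsilon)h_2\bigl(\tfrac{\epsilon}{1+\epsilon}\bigr).
\]
This bound depends only on the message alphabet, not on $|\mathcal Z|^{nT}$, which is essential because the output alphabet grows with $n$.

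Alternatively, a more elementary route works directly with the coupling. Introduce the indicator $\mathbf 1_{F}$ and compare the two entropies conditioned on it:
\[
H(\mathbf M\mid Z)\leq H(\mathbf M\mid Z,\mathbf 1_F)+h_2(p),
\]
and on the event $F^c$ the two processes coincide.

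\emph{Averaging over codebooks.} The resulting per-codebook bound has the form $c_1 p\log|\mathcal M|+c_2 h_2(p)$. I would take the expectation over $\boldsymbol{\mathsf C}$ and use concavity of $h_2$ with Jensen's inequality, which gives $\mathbb E\,h_2(p)\leq h_2(\bar p)$. The assumption $\bar p\leq1/2$ keeps $h_2$ monotone on the relevant range, so that $\epsilon\mapsto(1+\epsilon)h_2(\epsilon/(1+\epsilon))\leq 2h_2(\epsilon)$ can be handled by monotonicity. Adjusting constants then yields $\eta_{n,T}=4\bar p\log|\mathcal M_1^{T-1}|+2h_2(\bar p)$. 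The factor $4$ leaves room for the generous constants from either continuity route.

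\emph{Bound on $\bar p_{n,T}$.} The event $F_{n,T}$ is contained in the union over rounds $t=1,\dots,T-1$ of the event that the round-$t$ decoding at User~1 fails. Each round uses a freshly generated codebook, independent of the indices in that round. The one-block ML estimate from Lemma~\ref{L3T}, applied with the reduced rates $\widetilde R_i,\widetilde R_{i,r}$, bounds each round's expected error by the two-term exponential. Summing over at most $T$ rounds gives \eqref{eq:pbar-bound}.

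\emph{Main obstacle.} The hardest step is justifying that the one-block error bound holds conditionally in each round even though the inputs are adaptive. Adaptivity enters only through $C_t^{\rm real}$. Since $M_{1,e}^t$ is uniform and independent of everything else, $C_t^{\rm real}$ is still uniform on its index set and independent of the round-$t$ codebook. The random-coding error bound then applies conditionally on the past, and this is what I would verify carefully.
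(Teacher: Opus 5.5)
Your proposal is correct. The coupling and the union bound on $\bar p_{n,T}$ are the same as in the paper, but you organize the continuity and averaging step differently.

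The paper uses the continuity inequality $|H_P(X\mid Y)-H_Q(X\mid Y)|\leq 2\varepsilon\log|\mathcal X|+2h_2(\varepsilon)$, stated only for $\varepsilon\leq 1/2$. Since $p_{n,T}(\boldsymbol c)$ can exceed $1/2$ on some codebooks even when $\bar p_{n,T}\leq1/2$, the paper splits into $\mathcal G=\{p_{n,T}\leq 1/2\}$ and $\mathcal G^{\rm c}$. It uses the trivial bound $\log|\mathcal M_1^{T-1}|$ on $\mathcal G^{\rm c}$ together with Markov's inequality $\Pr\{\mathcal G^{\rm c}\}\leq 2\bar p_{n,T}$, which is where the factor $4$ comes from. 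It then applies Jensen to $p_{n,T}\mathbf 1_{\mathcal G}$.

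Your routes give per-codebook bounds valid for every $\varepsilon\in[0,1]$, so no split is needed. The indicator route is the cleanest. Write $\mathbf M=\mathbf M_1^{2:T}$, $Z=Z^{nT}$ and $p=p_{n,T}(\boldsymbol c)$. Then $H_{\rm ideal}(\mathbf M\mid Z)\leq h_2(p)+(1-p)H(\mathbf M\mid Z,F^{\rm c})+p\log|\mathcal M_1^{T-1}|$ and $H_{\rm real}(\mathbf M\mid Z)\geq (1-p)H(\mathbf M\mid Z,F^{\rm c})$. The two $F^{\rm c}$ terms are equal because the coupled processes coincide there. Hence $I_{\rm real}-I_{\rm ideal}\leq p\log|\mathcal M_1^{T-1}|+h_2(p)$ for every codebook. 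Jensen for $h_2$ then gives $\bar p_{n,T}\log|\mathcal M_1^{T-1}|+h_2(\bar p_{n,T})\leq\eta_{n,T}$, without using $\bar p_{n,T}\leq 1/2$.

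On the Alicki--Fannes--Winter route, watch the order of operations. The inequality $(1+\varepsilon)h_2(\varepsilon/(1+\varepsilon))\leq 2h_2(\varepsilon)$ fails near $\varepsilon=1$: the left side tends to $2\log 2$ and the right side to $0$. So you cannot first rewrite the per-codebook bound as $c_1p\log|\mathcal M_1^{T-1}|+c_2h_2(p)$ and then apply Jensen to $h_2$. Instead apply Jensen directly to $g(\varepsilon)=(1+\varepsilon)\log(1+\varepsilon)-\varepsilon\log\varepsilon$, which is concave and increasing. Only afterwards use $\bar p_{n,T}\leq 1/2$ to get $g(\bar p_{n,T})\leq 2h_2(\bar p_{n,T})$.

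As for what each buys: the paper uses a standard restricted-range continuity inequality as a black box, at the cost of the good/bad split, a Markov step and the constant $4$. Your approach is shorter and gives a sharper transfer term.
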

\begin{IEEEproof}
The proof is given in Appendix~\ref{APP: Proof of Lemma Ideal-real-transfer}.
\end{IEEEproof}

The following factorization records precisely why the one-block lemma can be applied conditionally in each round of the ideal process.
\begin{lemma}[Current-round index factorization]\label{lem:current-round-index-factorization}
Fix $t\geq2$ and condition on the codebooks outside round $t$. In the ideal process, let
\[
\mathcal H_{t-1}:=\sigma\!\left(\mathbf M_1^{2:t-1},K_{t-1}^{\rm full},Z^n[1:t-1],\boldsymbol{\mathsf C}_{<t},\boldsymbol{\mathsf C}_{>t}\right).
\]
The round-$t$ codebook $\mathsf C_t$ is independent of $\mathcal H_{t-1}$. Moreover, conditional on $\mathcal H_{t-1}$, the fresh variables $M_{1,\mathrm{sec}}^t$, $M_{1,e}^t$, $M_2^t$, $K_t^{\rm full}$, $M_{1,o}^t$, and $M_{2,o}^t$ remain mutually independent and uniform. Since $M_{1,e}^t$ is uniform and independent of $(K_{t-1},\mathcal H_{t-1})$, the ideal ciphertext $C_t=M_{1,e}^t\oplus K_{t-1}$ is uniform and independent of
\[
\left(M_{1,\mathrm{sec}}^t,K_t^{\rm full},M_2^t,M_{1,o}^t,M_{2,o}^t,\mathcal H_{t-1}\right).
\]
Consequently, conditional on $\mathcal H_{t-1}$, the five indices in \eqref{eq:selected-mapping} satisfy the joint independence and uniformity assumptions of Lemma~\ref{lem:selected-resolvability}. The lemma may therefore be applied with expectation only over $\mathsf C_t$; averaging the resulting conditional bound over $\mathcal H_{t-1}$ and the remaining codebooks gives the unconditional one-round bound.
\end{lemma}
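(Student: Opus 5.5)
The argument rests on the structure of the ideal process: which variables are generated in which round, and which variables the round-$t$ channel and codebook depend on. I would first write the joint law of the ideal process up to round $t-1$ as a product of three kinds of factors: the codebook factors $P_{\mathsf C_1},\ldots,P_{\mathsf C_T}$, which are mutually independent because a fresh codebook is generated per round; the message and key factors; and the channel kernels of rounds $1,\ldots,t-1$. In the ideal process, every quantity measurable with respect to $\mathcal H_{t-1}$ is a function of four things: the codebooks $\boldsymbol{\mathsf C}_{<t}$, the round-$1$ dummy, key and open variables, the variables of rounds $2,\ldots,t-1$, and the channel noise of rounds $1,\ldots,t-1$. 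The codebooks $\boldsymbol{\mathsf C}_{>t}$ are also included, but nothing is generated from them before round $t$. None of these depends on $\mathsf C_t$, and the codebook collection is generated independently of all messages and noise. Hence $\mathsf C_t\perp\mathcal H_{t-1}$.

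For the fresh variables, I would use that $M_{1,\mathrm{sec}}^t$, $M_{1,e}^t$, $M_2^t$, $K_t^{\rm full}$, $M_{1,o}^t$ and $M_{2,o}^t$ are drawn uniformly, mutually independently, and independently of everything generated in earlier rounds and of all codebooks. Independence from $Z^n[1:t-1]$ requires one observation. Past outputs depend only on past indices, codebooks and noise, and in the ideal process no round-$s<t$ encoder uses round-$t$ variables. The only cross-round dependence is the key $K_{s-1}$, which enters $C_s$ and points backward in time. So the joint law factorizes as $P(\text{fresh}_t)\,P(\mathcal H_{t-1}\text{-variables})$, and the conditional law given $\mathcal H_{t-1}$ is the uniform product law.

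The ciphertext claim is the one-time-pad step. $K_{t-1}$, the projection of $K_{t-1}^{\rm full}$, is $\mathcal H_{t-1}$-measurable. So I would condition on $\mathcal H_{t-1}$ and on the other fresh variables $(M_{1,\mathrm{sec}}^t,K_t^{\rm full},M_2^t,M_{1,o}^t,M_{2,o}^t)$. Under this conditioning, $M_{1,e}^t$ is still uniform on $G_n$ by the previous step, so $C_t=M_{1,e}^t\oplus K_{t-1}$ is uniform on $G_n$, because translation by a fixed group element is a bijection. The conditional law does not depend on the conditioning values, which gives both uniformity and the stated independence. Combining this with the factorization of the other fresh variables, the tuple $(M_{1,\mathrm{sec}}^t,(C_t,M_{1,o}^t),K_t^{\rm full},M_2^t,M_{2,o}^t)$ is, conditionally on $\mathcal H_{t-1}$, mutually independent and uniform.

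Matching this tuple to the roles $A_1$, $B_1$, $A_2$, $J_2$, $B_2$ of Lemma~\ref{lem:selected-resolvability} verifies its hypotheses. I would then apply that lemma under the conditional law given $\mathcal H_{t-1}$, with expectation over $\mathsf C_t$ alone; this is legitimate because $\mathsf C_t$ remains independent of $\mathcal H_{t-1}$ and of the indices. The right-hand side of \eqref{eq:selected-resolvability} does not depend on the conditioning, so taking the expectation over $\mathcal H_{t-1}$ and the remaining codebooks yields the unconditional bound.

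The main obstacle is that the statement conditions on the $\sigma$-algebra $\mathcal H_{t-1}$ rather than on a fixed history, so the resolvability bound must be justified pointwise in the history. One must also check that the round-$t$ output $Z^n[t]$, conditioned on this history, is generated only through the current codewords and the memoryless channel. This holds because the ideal encoder at round $t$ uses no past output beyond $K_{t-1}$, which has already been absorbed into the uniform ciphertext.
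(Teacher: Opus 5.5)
Your proposal is correct and follows essentially the same route as the paper, which proves the ciphertext claim by the translation computation $\Pr\{C_t=c\mid\mathcal H_{t-1}=h\}=\sum_k\Pr\{K_{t-1}=k\mid h\}\Pr\{M_{1,e}^t=c\ominus k\}=|G_n|^{-1}$, repeats this computation after adjoining the other fresh round-$t$ variables, and obtains the independence of $\mathsf C_t$ from the per-round codebook generation. Your version additionally makes explicit what the paper leaves implicit: in the ideal process no encoder uses any channel output, and the only cross-round link is the true key $K_{t-1}$, which is $\mathcal H_{t-1}$-measurable and is not a ``past output'' as your last paragraph phrases it, so the fresh variables factor off the history.
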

\begin{IEEEproof}
For every value $h$ of $\mathcal H_{t-1}$ and every $c$ in the ciphertext group, we have
\begin{align}
&\Pr\{C_t=c\mid\mathcal H_{t-1}=h\}\notag\\
=&\sum_k\Pr\{K_{t-1}=k\mid h\}\Pr\{M_{1,e}^t=c\ominus k\}
=|G_n|^{-1}.
\end{align}
The same calculation after adjoining any collection of the fresh round-$t$ variables listed above gives the asserted joint factorization. Independence of $\mathsf C_t$ follows from the independent generation of the round codebooks.
\end{IEEEproof}

We apply Lemma~\ref{lem:selected-resolvability} conditionally as stated in Lemma~\ref{lem:current-round-index-factorization}, under the following identification of its abstract indices with the round-$t$ variables:
\begin{align}\label{eq:selected-mapping}
 A_1&=M_{1,\mathrm{sec}}^t,\quad\ B_1=(C_t,M_{1,o}^t);\notag\\
 A_2&=K_t^{\rm full}, \quad J_2=M_{2}^t,\quad B_2=M_{2,o}^t.
\end{align}
Under this identification, the effective averaging rates are $\widetilde R_{1,r}$ and $R_2+\widetilde R_{2,r}$. For every realization of $\mathcal H_{t-1}$, Lemma~\ref{lem:selected-resolvability} bounds the conditional expectation over the independent current codebook $\mathsf C_t$. The bound is independent of the realized history; the tower property therefore gives the same bound after averaging over the past and all other codebooks. Denote this one-round bound by $\delta_n$:
\begin{align}
\delta_n:=&e^{ns(E_s(P)-\widetilde R_{1,r}-\widetilde R_{2,r}-R_2)}\notag\\
&+e^{ns(C_s(P)-\widetilde R_{1,r})}
+e^{ns(D_s(P)-\widetilde R_{2,r}-R_2)}.
\label{eq:delta-block}
\end{align}

\begin{lemma}\Label{L3TA}
For every fixed $P\in\mathcal Q$, every fixed $s\in(0,1]$, and every integer $T\geq2$, there exists a reduced one-sided adaptive code $\Co_n^T$ of blocklength $nT$ and payload sizes $e^{n(T-1)R_1}$ and $e^{n(T-1)R_2}$ such that
\begin{align}
P_e^{nT}(\Co_n^T)
\leq 2T\Big[&e^{ns(\widetilde R_2+\widetilde R_{2,r}-B_s(P))}\notag\\
&+e^{ns(\widetilde R_1+\widetilde R_{1,r}-A_s(P))}
\Big],\label{eq:ReliabilityDA}\\
I(\mathbf M_1^{2:T};Z^{nT}\mid\Co_n^T)
&\leq 2T\delta_n+2\eta_{n,T}.\label{eq:SSecDA}
\end{align}
where $\mathbf M_1^{2:T}:=(M_1^2,\ldots,M_1^T)$, $M_1^t=(M_{1,\mathrm{sec}}^t, M_{1,e}^t)$,  $Z^{nT}=(Z^n[1],\ldots,Z^n[T])$, and $\eta_{n,T}$ is defined in \eqref{eq:eta-definition}. For fixed $T$, it decreases exponentially under the strict reliability conditions.
\end{lemma}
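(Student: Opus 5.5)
We prove Lemma~\ref{L3TA} with a random-coding argument over the collection $\boldsymbol{\mathsf C}=(\mathsf C_1,\ldots,\mathsf C_T)$ of independently generated round codebooks. We bound the expected error and the expected ideal-process leakage separately, transfer the leakage bound to the actual code with Lemma~\ref{lem:ideal-real-transfer}, and then derandomize with Markov's inequality. The error bound is routine. In each round the underlying non-adaptive code has rates $(\widetilde R_1+\widetilde R_{1,r},\widetilde R_2+\widetilde R_{2,r})$, so \eqref{eq:Reliability2D} gives the one-round expected error. A union bound over the $T$ rounds, plus the error of key-decoding (already counted in that error), gives $\mathbb E_{\boldsymbol{\mathsf C}}P_e^{nT}\leq T[\cdots]$, i.e.\ half the right-hand side of \eqref{eq:ReliabilityDA}. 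The same estimate is \eqref{eq:pbar-bound}.

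The core step is the leakage recursion for the ideal process. By the chain rule,
\[
I(\mathbf M_1^{2:T};Z^{nT})\leq \sum_{t=2}^T I(M_1^t;Z^{nT}\mid \mathbf M_1^{2:t-1}).
\]
Instead of this, we track the joint quantity $L_t:=I(\mathbf M_1^{2:t},K_t^{\rm full};Z^n[1:t])$ with the codebooks conditioned throughout. Two facts drive the recursion. First, future blocks $Z^n[t+1:T]$ depend on $(\mathbf M_1^{2:t},Z^n[1:t])$ only through $K_t^{\rm full}$ and fresh variables. Second, the final key $K_T^{\rm full}$ can simply be dropped. Therefore $I(\mathbf M_1^{2:T};Z^{nT})\leq L_T$.

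For the step $L_{t-1}\to L_t$, write
\[
L_t=I(\mathbf M_1^{2:t-1};Z^n[1:t])+I(M_1^t,K_t^{\rm full};Z^n[1:t]\mid\mathbf M_1^{2:t-1}).
\]
We then bound the two terms as follows.
\begin{itemize}
\item Since $Z^n[t]$ depends on the past only through $K_{t-1}$ (via $C_t$), the first term is at most $I(\mathbf M_1^{2:t-1};Z^n[1:t-1])+I(K_{t-1};Z^n[t]\mid \mathbf M_1^{2:t-1},Z^n[1:t-1])$.
\item In the second term, the contribution of $(M_{1,\mathrm{sec}}^t,K_t^{\rm full})$ is bounded by Lemma~\ref{lem:selected-resolvability}, applied conditionally via Lemma~\ref{lem:current-round-index-factorization} with the identification \eqref{eq:selected-mapping}. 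This yields $\delta_n$.
\item The contribution of $M_{1,e}^t$ is handled by Lemma~\ref{lem:otp-side-information} with $U=M_{1,e}^t$, $K=K_{t-1}=\pi(K_{t-1}^{\rm full})$, $W=\mathbf M_1^{2:t-1}$, and $S=(Z^n[1:t-1],$ current-round data$)$. It is bounded by the key leakage $I(K_{t-1}^{\rm full};Z^n[1:t-1]\mid\mathbf M_1^{2:t-1})$, which is already part of $L_{t-1}$.
\end{itemize}
Collecting the terms should give $L_t\leq L_{t-1}+2\delta_n$ or something close. Together with $L_1=I(K_1^{\rm full};Z^n[1])\leq\delta_n$, which follows from Lemma~\ref{lem:selected-resolvability} after discarding the dummy coordinates by data processing, this gives $\mathbb E I_{\rm ideal}\leq T\delta_n$.

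The main obstacle is organizing this recursion so that key leakage is not counted twice and only linear growth in $T$ results. The one-time-pad term and the cross term $I(K_{t-1};Z^n[t]\mid\cdot)$ both involve $K_{t-1}$. My first attempt is to merge them into a single quantity, $I(M_{1,e}^t,K_{t-1}^{\rm full};Z^n[1:t]\mid\mathbf M_1^{2:t-1})$, and to bound it using that $C_t$ is independent of the history (Lemma~\ref{lem:current-round-index-factorization}). Then $Z^n[t]$ carries information about the pair only through $C_t$, which is an averaged index in round $t$. If this merging fails, I will accept a constant factor per round; the statement's factor $2T$ leaves room for this.

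Finally, Lemma~\ref{lem:ideal-real-transfer} gives $\mathbb E I_{\rm real}\leq T\delta_n+\eta_{n,T}$, assuming $\bar p_{n,T}\leq 1/2$, which holds for large $n$ under strict reliability. We apply Markov's inequality to the sum of the normalized error and the normalized leakage, each divided by its expectation bound. This selects one codebook collection for which both quantities are at most twice their expectations, giving \eqref{eq:ReliabilityDA} and \eqref{eq:SSecDA}.
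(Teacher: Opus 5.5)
Your architecture matches the paper's. You track the same quantity $L_t$, use $L_1\le\delta_n$, apply Lemma~\ref{lem:selected-resolvability} conditionally to $S_t:=(M_{1,\mathrm{sec}}^t,K_t^{\rm full})$, apply Lemma~\ref{lem:otp-side-information} to $M_{1,e}^t$, and finish with the transfer lemma and Markov derandomization. The gap is that the recursion, which is the core of the proof, is never closed: you end with ``$L_t\le L_{t-1}+2\delta_n$ or something close'' and an unresolved cross term.

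The missing observation is that this cross term is identically zero in the ideal process. By Lemma~\ref{lem:current-round-index-factorization}, $C_t=M_{1,e}^t\oplus K_{t-1}$ is uniform and independent of $(K_{t-1}^{\rm full},\mathbf M_1^{2:t-1},Z^n[1:t-1],S_t)$ and of the fresh open indices. Moreover, $Z^n[t]$ is generated from these current-round indices and $\mathsf C_t$ alone. Hence, given the codebooks, $Z^n[t]$ is independent of $(K_{t-1},\mathbf M_1^{2:t-1},Z^n[1:t-1])$. So $I(K_{t-1};Z^n[t]\mid\mathbf M_1^{2:t-1},Z^n[1:t-1])=0$, the first term of your split equals $I(\mathbf M_1^{2:t-1};Z^n[1:t-1])$ exactly, and nothing is double-counted.

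For the pad term, proceed in this order:
\begin{itemize}
\item Split off $S_t$ before $M_{1,e}^t$.
\item Use the Markov chain $M_{1,e}^t-(C_t,Z^n[1:t-1],\mathbf M_1^{2:t-1},S_t,\boldsymbol{\mathsf C})-Z^n[t]$ to replace $Z^n[t]$ by $C_t$.
\item Only then apply Lemma~\ref{lem:otp-side-information}, conditioned on $(\mathbf M_1^{2:t-1},S_t)$, with side information $Z^n[1:t-1]$ alone, and remove $S_t$ by independence.
\end{itemize}
Putting current-round output into $S$, as your ``current-round data'' suggests, violates the hypothesis $U\perp(K,W,S)$, because $(K_{t-1},Z^n[t])$ jointly carries information about $M_{1,e}^t$ in general. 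The two surviving terms then combine by the chain rule into $I(\mathbf M_1^{2:t-1},K_{t-1};Z^n[1:t-1])\le L_{t-1}$. This gives exactly $L_t\le L_{t-1}+\delta_n$.

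Neither of your fallbacks works.
\begin{itemize}
\item \emph{Merging.} The merged quantity $I(M_{1,e}^t,K_{t-1}^{\rm full};Z^n[1:t]\mid\mathbf M_1^{2:t-1})$ is at least $I(C_t;Z^n[t])$, since $C_t$ is a function of the pair and $(C_t,Z^n[t])$ is independent of $(\mathbf M_1^{2:t-1},Z^n[1:t-1])$. But $C_t$ is part of User~1's averaged index $B_1$, and resolvability says nothing about what Eve learns about an averaged index: it is mixed over, not hidden. So this quantity can be of order $n$.
\item \emph{Constant factor.} The factor $2$ in $2T\delta_n$ is entirely consumed by the Markov step you use at the end: the expected leakage $T\delta_n+\eta_{n,T}$ is doubled. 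So $L_t\le L_{t-1}+2\delta_n$ would only give $(4T-2)\delta_n+2\eta_{n,T}$. Counting the key leakage twice would produce a recursion of the form $L_t\le 2L_{t-1}+\delta_n$, which grows exponentially in $T$.
\end{itemize}
The stated bound needs precisely the tight recursion above, and that recursion rests on the vanishing cross term.
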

\begin{IEEEproof}
The proof is given in Appendix~\ref{APP: Proof of Lemma L3TA}.
\end{IEEEproof}

\subsection{Achievable secrecy region}
For each $P\in\mathcal Q_{\rm F}$, define the fixed-distribution
adaptive region
\begin{align}
\mathcal R_{\rm A}(P):=
\Bigl\{(R_1,R_2)\in\mathbb R_+^2:\;&
R_1\leq A(P),\notag\\
&R_2\leq B(P),\notag\\
&R_1\leq A(P)+B(P)-E(P),\notag\\
&R_1+R_2\leq A(P)+B(P)-C(P)
\Bigr\}.
\label{HH2AA}
\end{align}

\begin{lemma}\label{lem:A-achievable-region}
The following payload-rate region is achievable by the reduced
adaptive construction under the strong mixed-secrecy criterion:
\begin{equation}
\mathcal R_{\rm A}
:=
\overline{\operatorname{conv}}\!\left(
\bigcup_{P\in\mathcal Q_{\rm F}}
\mathcal R_{\rm A}(P)
\right).
\label{eq:A-achievable-region}
\end{equation}
\end{lemma}
\begin{IEEEproof}
\textit{Step 1: Exact projection for fixed $P$.}
Fix $P\in\mathcal Q_{\rm F}$. The auxiliary rates of the reduced
construction satisfy the following inequalities:
\begin{align}
\widetilde R_2+\widetilde R_{2,r}
&<B(P), \Label{G5AA}\\
\widetilde R_1+\widetilde R_{1,r}
&<A(P), \Label{G4AA}\\
\widetilde R_{1,r}
&>C(P), \Label{G1A}\\
\widetilde R_{2,r}+R_2
&>D(P), \Label{G2A}\\
\widetilde R_{1,r}+\widetilde R_{2,r}+R_2
&>E(P). \Label{G3A}
\end{align}
Together with \eqref{G6A}, \eqref{G8A}, and
\eqref{eq:reduced-rates}, these inequalities define the
fixed-$P$ auxiliary-rate system.

Introduce the following aggregate rates:
\begin{align}
x&:=R_{1,e}+R_{1,o},\notag\\
y&:=R_2+R_{2,o},\notag\\
e&:=R_{1,e}.
\label{eq:adaptive-aggregate-rates}
\end{align}
For fixed $(R_1,R_2,x,y,e)$, the key-size constraint requires
$R_{2,k}\geq e$. Increasing $R_{2,k}$ only tightens the reliability
constraint for User~2, so feasibility can be tested by setting
$R_{2,k}=e$. The reduced auxiliary-rate system is then equivalent to the following system of inequalities:
\begin{align}
x&>C(P),\notag\\
y&>D(P),\notag\\
x+y&>E(P),\notag\\
y&\geq R_2,\notag\\
R_1+x-e&<A(P),\notag\\
e+y&<B(P),\notag\\
0\leq e&\leq\min\{R_1,x\}.
\label{eq:adaptive-reduced}
\end{align}
For strictly feasible interior points, the interval condition for
$e$ is given by the following pair of inequalities:
\begin{align}
e&>\max\{0,R_1+x-A(P)\},\notag\\
e&<\min\{R_1,x,B(P)-y\}.
\label{eq:e-interval}
\end{align}
Eliminating $e$, $x$, and $y$ gives the following preliminary description of the projected region:
\begin{align}
R_1&<A(P),\notag\\
R_2&<B(P),\notag\\
R_1&<A(P)+B(P)\notag\\
&\quad-\max\{E(P),C(P)+D(P)\},\notag\\
R_1+R_2&<A(P)+B(P)-C(P).
\label{eq:A-fixed-P-projection-preliminary}
\end{align}
Since $V_1$ and $V_2$ are independent under every
$P\in\mathcal Q$, $E(P)$ admits the following decomposition:
\begin{align}
E(P)
={}&C(P)+D(P)\notag\\
&+I_P(V_1;V_2\mid Z)
\geq C(P)+D(P).
\label{eq:E-dominates-CD}
\end{align}
The third constraint in
\eqref{eq:A-fixed-P-projection-preliminary} therefore reduces to
\begin{equation}
R_1<A(P)+B(P)-E(P).
\end{equation}
Consequently, the closure of the fixed-$P$ projection is
$\mathcal R_{\rm A}(P)$.

\textit{Step 2: Finite-$s$ margins and fixed-round bounds.}
Consider a rate pair in the interior of
$\mathcal R_{\rm A}(P)$. Step~1 gives component rates for which
\eqref{G5AA}--\eqref{G3A} hold with a common positive slack. Since
$P$ is fixed, the continuity relations stated in the non-adaptive
exponential evaluation imply that a sufficiently small fixed
$s\in(0,1]$ can be chosen so that all of the following inequalities hold:
\begin{align}
\widetilde R_2+\widetilde R_{2,r}
&<B_s(P),\notag\\
\widetilde R_1+\widetilde R_{1,r}
&<A_s(P),\notag\\
\widetilde R_{1,r}
&>C_s(P),\notag\\
\widetilde R_{2,r}+R_2
&>D_s(P),\notag\\
\widetilde R_{1,r}+\widetilde R_{2,r}+R_2
&>E_s(P).
\label{eq:A-finite-s-system}
\end{align}
Lemma~\ref{L3TA} then gives exponentially decreasing error and
leakage for every fixed number of rounds $T$, including the
actual-to-ideal correction in \eqref{eq:eta-definition}.

\textit{Step 3: Growing rounds and padding for fixed $P$.}
To remove the initialization-rate loss and obtain codes for every
sufficiently large total blocklength $N$, choose the following round parameters:
\begin{align}
T_N&:=\lfloor N^{1/3}\rfloor,\notag\\
n_N&:=\lfloor N/T_N\rfloor,\notag\\
N'_N&:=n_NT_N.
\label{eq:growing-round-parameters}
\end{align}
Run the $T_N$-round construction for the first $N'_N$ channel
uses. During the remaining $r_N=N-N'_N<T_N$ uses, both users
transmit fixed input symbols and the decoders ignore the
corresponding outputs. By memorylessness, the padding output is
independent of the messages and the active-block output, so it does
not increase the leakage or the error probability. Furthermore, these parameters satisfy the following asymptotic relations:
\begin{align}
T_N&\longrightarrow\infty,\notag\\
n_N&\longrightarrow\infty,\notag\\
\frac{\ln T_N}{n_N}&\longrightarrow0,\notag\\
\frac{N'_N}{N}&\longrightarrow1.
\label{eq:growing-round-limits}
\end{align}
If $\alpha>0$ denotes the minimum of the two fixed finite-$s$
reliability margins multiplied by $s$, then
\eqref{eq:pbar-bound} gives
\begin{equation}
\bar p_{n_N,T_N}
\leq 2T_Ne^{-n_N\alpha}
\end{equation}
for all sufficiently large $N$. The message-alphabet size satisfies
\begin{align}
\log|\mathcal M_1^{T_N-1}|
={}&n_N(T_N-1)R_1\notag\\
&+o(n_NT_N),
\end{align}
Therefore, \eqref{eq:eta-definition} yields
\begin{align}
\eta_{n_N,T_N}
={}&O\!\left(n_NT_N^2e^{-n_N\alpha}\right)\notag\\
&+2h_2\!\left(2T_Ne^{-n_N\alpha}\right)
\longrightarrow0.
\end{align}
The terms $T_N\delta_{n_N}$ and the reliability bound also tend
to zero because $\ln T_N/n_N\to0$. Finally, the effective payload rate of User~$i$ converges to $R_i$, because
\begin{align}
\frac{n_N(T_N-1)R_i}{N}
={}&\frac{N'_N}{N}
\left(1-\frac1{T_N}\right)R_i\notag\\
&\longrightarrow R_i.
\end{align}
Thus every interior point of $\mathcal R_{\rm A}(P)$ is
achievable for the fixed distribution $P$. Boundary points follow
by choosing a sequence of achievable interior rate pairs converging
to the desired point.

\textit{Step 4: Union, time sharing, and closure.}
The code distribution may be chosen arbitrarily from
$\mathcal Q_{\rm F}$. Therefore, every rate pair in
\begin{equation}
\bigcup_{P\in\mathcal Q_{\rm F}}
\mathcal R_{\rm A}(P)
\end{equation}
is achievable by the reduced adaptive construction. Time sharing
among finitely many adaptive codes remains adaptive and gives the
convex hull of this union. A standard diagonal argument then gives
its closure. Thus every rate pair in $\mathcal R_{\rm A}$ is
achievable.
\end{IEEEproof}

\begin{remark}[Relation to the full construction]
Fix $P\in\mathcal Q_{\rm F}$. For code construction~I, define
$\widetilde R_i=R_{i,\mathrm{sec}}+R_{i,k}$ and
$\widetilde R_{i,r}=R_{i,e}+R_{i,o}$, with
$R_i=R_{i,\mathrm{sec}}+R_{i,e}$ and
$R_{i,e}\leq R_{i\oplus1,k}$. Its Shannon-information auxiliary-rate system consists of the following inequalities:
\begin{align}
\widetilde R_2+\widetilde R_{2,r}
&<B(P), \Label{G5AA-FA}\\
\widetilde R_1+\widetilde R_{1,r}
&<A(P), \Label{G4AA-FA}\\
\widetilde R_{1,r}
&>C(P), \Label{G1A-FA}\\
\widetilde R_{2,r}+R_{2,\mathrm{sec}}
&>D(P), \Label{G2A-FA}\\
\widetilde R_{1,r}+\widetilde R_{2,r}
+R_{2,\mathrm{sec}}
&>E(P). \Label{G3A-FA}
\end{align}
The same fixed-$P$ continuity argument, combined with
Lemma~\ref{L3TA-FA}, establishes the achievability of every
strictly feasible tuple in this system. Appendices~\ref{APP: FM-Full-Adaptive} and~\ref{APP: FM-OS-Adaptive} show that, for
each fixed $P\in\mathcal Q_{\rm F}$, the closures of the full and
reduced projections coincide. Hence their unions over
$P\in\mathcal Q_{\rm F}$, and therefore their closed union regions,
also coincide. The reduced construction thus has a simpler
one-sided structure without loss in the achievable rate region.
\end{remark}
\begin{corollary}\label{cor:comparison}
For every fixed input distribution $P\in\mathcal Q_{\rm F}$, both
fixed-distribution regions are nonempty and
$\mathcal R_{\rm N}(P)\subseteq\mathcal R_{\rm A}(P)$. The
individual bound on User~1's rate improves from
\begin{align}
\min\bigl\{A(P)-C(P),
A(P)+B(P)-E(P)\bigr\}
\end{align}
to
\begin{equation}
\min\bigl\{A(P),A(P)+B(P)-E(P)\bigr\}.
\end{equation}
Nevertheless, both regions have the following common maximum sum-rate:
\begin{align}
A(P)+B(P)
-\max\bigl\{C(P),E(P)-B(P)\bigr\}.
\label{eq:common-sum-rate}
\end{align}
Consequently, we have the following overall inclusion:
\begin{equation}
\mathcal R_{\rm N}\subseteq\mathcal R_{\rm A}.
\label{eq:overall-region-inclusion}
\end{equation}
\end{corollary}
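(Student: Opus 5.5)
The plan is to work entirely with the explicit polyhedral descriptions \eqref{HH2NN} and \eqref{HH2AA} for a fixed $P\in\mathcal Q_{\rm F}$, and then lift the fixed-$P$ inclusion to the closed convex hulls.

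\emph{Nonemptiness.} I would observe that the strict feasibility conditions in \eqref{eq:QN-definition} give $A(P)-C(P)>0$, $A(P)+B(P)-E(P)>0$ and $B(P)>D(P)\geq0$. Hence a small ball around the origin intersected with $\mathbb R_+^2$ lies in $\mathcal R_{\rm N}(P)$, so this region is nonempty (indeed it has nonempty interior).

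\emph{Fixed-$P$ inclusion.} Take $(R_1,R_2)\in\mathcal R_{\rm N}(P)$ and check the four constraints of $\mathcal R_{\rm A}(P)$ one by one.
\begin{itemize}
\item $R_1\leq A(P)$ holds because $R_1\leq A(P)-C(P)$ and $C(P)\geq0$.
\item $R_2\leq B(P)$ and $R_1\leq A(P)+B(P)-E(P)$ appear verbatim in both regions.
\item $R_1+R_2\leq (A(P)-C(P))+B(P)$ follows by adding the first and third constraints of \eqref{HH2NN}.
\end{itemize}
This gives $\mathcal R_{\rm N}(P)\subseteq\mathcal R_{\rm A}(P)$, and in particular $\mathcal R_{\rm A}(P)$ is nonempty.

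\emph{User~1's individual bound.} Setting $R_2=0$ in each region gives the two stated maxima. For $\mathcal R_{\rm A}(P)$ the sum constraint yields $A(P)+B(P)-C(P)$, which is at least $A(P)$ because $C(P)\leq E(P)<A(P)+B(P)$ only bounds it from one side. The comparison I actually need is $B(P)-C(P)\geq B(P)-E(P)$, i.e.\ $C(P)\leq E(P)$, which is \eqref{eq:E-dominates-CD} using $D(P)\geq0$. So the sum constraint is never the binding one at $R_2=0$, and the maximum of $R_1$ is $\min\{A(P),A(P)+B(P)-E(P)\}$.

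\emph{Common maximum sum-rate.} This is the step requiring the most care. For $\mathcal R_{\rm A}(P)$ the sum is at most
\[
\min\{A(P)+B(P)-C(P),\;\text{the sum of the individual bounds}\}.
\]
The individual bounds give $\min\{A(P),A(P)+B(P)-E(P)\}+B(P)$, which is never smaller than $A(P)+B(P)-C(P)$ in the branch $E(P)\leq B(P)+C(P)$. I would then verify attainability through the corner point
\[
R_2=\min\bigl\{B(P),\,A(P)+B(P)-C(P)-R_1\bigr\},\qquad R_1=\min\{A(P)-C(P),\,A(P)+B(P)-E(P)\}.
\]
That is, I take the $\mathcal R_{\rm N}(P)$ corner with $R_1$ at its maximum and $R_2=B(P)$. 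Its sum is $\min\{A(P)-C(P),A(P)+B(P)-E(P)\}+B(P)$, which equals \eqref{eq:common-sum-rate}.

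Since $\mathcal R_{\rm N}(P)\subseteq\mathcal R_{\rm A}(P)$, it then suffices to show that the sum-rate of $\mathcal R_{\rm A}(P)$ is at most \eqref{eq:common-sum-rate}. The bound $R_1+R_2\leq A(P)+B(P)-C(P)$ is explicit. The other bound, $R_1+R_2\leq A(P)+2B(P)-E(P)$, comes from adding the third and second constraints of \eqref{HH2AA}. The main obstacle is to check carefully that these two bounds combine to $A(P)+B(P)-\max\{C(P),E(P)-B(P)\}$, which is exactly the claimed expression. I would confirm this with a short case split on the sign of $E(P)-B(P)-C(P)$.

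\emph{Lifting to the closed convex hulls.} Finally, taking the union over $P\in\mathcal Q_{\rm F}$ preserves the inclusion, and the closed convex hull operation is monotone. This gives \eqref{eq:overall-region-inclusion}.
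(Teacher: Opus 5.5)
Your proposal is correct and follows essentially the same route as the paper: you check the inclusion constraint by constraint, and you bound the sum rate of $\mathcal R_{\rm A}(P)$ by $\min\{A(P)+B(P)-C(P),\,A(P)+2B(P)-E(P)\}$. You then attain that bound at the corner $(\min\{A(P)-C(P),A(P)+B(P)-E(P)\},B(P))$ of $\mathcal R_{\rm N}(P)$, which is exactly the paper's two case points, and you lift the inclusion through unions and closed convex hulls. One small slip: your aside that $A(P)+B(P)-C(P)\geq A(P)$ is false in general, since it needs $B(P)\geq C(P)$; it is harmless, because the comparison you actually use, $C(P)\leq E(P)$, is what makes the sum constraint non-binding at $R_2=0$.
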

\begin{IEEEproof}
The non-adaptive inequalities imply
\begin{align}
R_1&\leq A(P)-C(P)\leq A(P),\notag\\
R_1&\leq A(P)+B(P)-E(P),\notag\\
R_1+R_2&\leq A(P)+B(P)-C(P),
\end{align}
so $\mathcal R_{\rm N}(P)\subseteq\mathcal R_{\rm A}(P)$.
The non-adaptive region is a rectangle with upper-right corner
\begin{equation}
\bigl(A(P)-\max\{C(P),E(P)-B(P)\},B(P)\bigr),
\end{equation}
and hence its maximum sum-rate is
\eqref{eq:common-sum-rate}. For the adaptive region,
\begin{align*}
&\max_{(R_1,R_2)\in\mathcal R_{\rm A}(P)}(R_1+R_2)\\
={}&\min\Bigl\{
A(P)+B(P)-C(P),\notag\\
&\hspace{12mm}B(P)+\min\{A(P),A(P)+B(P)-E(P)\}
\Bigr\}\\
={}&A(P)+B(P)
-\max\{C(P),E(P)-B(P)\}.
\end{align*}
If $C(P)\geq E(P)-B(P)$, the point
$(A(P)-C(P),B(P))$ attains this value. If
$C(P)<E(P)-B(P)$, the point
$(A(P)+B(P)-E(P),B(P))$ attains it. Both points satisfy all
adaptive inequalities in their respective cases.
Since $\mathcal R_{\rm N}(P)\subseteq
\mathcal R_{\rm A}(P)$ for every $P\in\mathcal Q_{\rm F}$, this
inclusion is preserved under unions, convex hulls, and closures.
Hence $\mathcal R_{\rm N}\subseteq\mathcal R_{\rm A}$.
\end{IEEEproof}
\section{Comparisons among achievable regions}\label{sec:comparisons}
The comparisons below concern construction-specific inner bounds, not capacity regions. Every fixed-distribution statement uses the strict feasibility conditions proved above. In particular, closure of a fixed-distribution projection does not add points when its strict auxiliary-rate system is empty.

\subsection{Weak- and strong-secrecy inner bounds}
Set $V_i=X_i$ and fix a product distribution of the form
\begin{equation}
P=P_{X_1}P_{X_2}.
\end{equation}
Define the product-input information quantities as follows:
\begin{align}
A_X(P)&:=I_P(Y_2;X_1\mid X_2),\notag\\
B_X(P)&:=I_P(Y_1;X_2\mid X_1),\notag\\
C_X(P)&:=I_P(Z;X_1),\notag\\
D_X(P)&:=I_P(Z;X_2),\notag\\
E_X(P)&:=I_P(Z;X_1,X_2).
\label{eq:product-information-quantities}
\end{align}
These are the specializations of the general information quantities
to $V_i=X_i$. In particular, the following identities hold:
\begin{align}
A_X(P)&=A(P),&B_X(P)&=B(P),\notag\\
C_X(P)&=C(P),&D_X(P)&=D(P),\notag\\
E_X(P)&=E(P).
\label{eq:product-general-identification}
\end{align}
The weak one-sided inner bound of \cite[Theorem~1]{QCHT2016} can be
written as
\begin{align}
\mathcal R_{\rm W}(P)
:=\Bigl\{(R_1,R_2)\in\mathbb R_+^2:&
R_2\leq B_X(P),\label{eq:weak-region-product}\\
&R_1\leq A_X(P)-C_X(P)\notag\\
&R_1\leq A_X(P)+R_2-E_X(P)\Bigr\}.\notag
\end{align}
Here product inputs give
$I_P(X_2;Z\mid X_1)=E_X(P)-C_X(P)$. Under the following strict feasibility conditions:
\begin{align}
C_X(P)&<A_X(P),\notag\\
D_X(P)&<B_X(P),\notag\\
E_X(P)&<A_X(P)+B_X(P),
\label{eq:product-strict-feasibility}
\end{align}
the present non-adaptive strong-secrecy inner bound is the following rate region:
\begin{align}
\mathcal R_{\rm N}(P)
=\Bigl\{(R_1,R_2)\in\mathbb R_+^2:\;&
R_2\leq B_X(P),\label{eq:strong-product-region}\\
&R_1\leq A_X(P)-C_X(P),\notag\\
&R_1\leq A_X(P)+B_X(P)-E_X(P)
\Bigr\}.\notag
\end{align}
If a displayed upper bound is negative, intersection with
$\mathbb R_+^2$ makes the region empty.

\begin{proposition}\label{prop:weak-strong-containment}
For every product distribution $P=P_{X_1}P_{X_2}$ satisfying
\eqref{eq:product-strict-feasibility},
\begin{equation}
\mathcal R_{\rm W}(P)\subseteq\mathcal R_{\rm N}(P).
\end{equation}
\end{proposition}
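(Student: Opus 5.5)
The plan is a direct comparison of the defining inequalities. Fix a product distribution $P=P_{X_1}P_{X_2}$ satisfying \eqref{eq:product-strict-feasibility}. By \eqref{eq:product-general-identification}, $\mathcal R_{\rm N}(P)$ in \eqref{eq:strong-product-region} is exactly the fixed-distribution non-adaptive region \eqref{HH2NN} for this $P$ (with $V_i=X_i$). Also, \eqref{eq:product-strict-feasibility} is exactly the defining condition of $\mathcal Q_{\rm F}$ in \eqref{eq:QN-definition}. So $\mathcal R_{\rm N}(P)$ is the region whose achievability Lemma~\ref{lem:NA-achievable-region} already guarantees, and no coding argument is needed. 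It suffices to show that every $(R_1,R_2)\in\mathcal R_{\rm W}(P)$ satisfies the three inequalities of \eqref{eq:strong-product-region}.

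Take $(R_1,R_2)\in\mathcal R_{\rm W}(P)$. The constraint $R_2\le B_X(P)$ appears verbatim in both regions, and so does $R_1\le A_X(P)-C_X(P)$. For the remaining constraint, chain the weak bound with $R_2\le B_X(P)$:
\begin{equation*}
R_1\le A_X(P)+R_2-E_X(P)\le A_X(P)+B_X(P)-E_X(P).
\end{equation*}
This is the third inequality of \eqref{eq:strong-product-region}. Nonnegativity is common to both regions, since both are intersected with $\mathbb R_+^2$. Hence $\mathcal R_{\rm W}(P)\subseteq\mathcal R_{\rm N}(P)$.

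I do not expect a real obstacle; the only care needed is bookkeeping. First, the form \eqref{eq:weak-region-product} of the weak region from \cite[Theorem~1]{QCHT2016} must be taken as given in the paper. This includes the product-input identity $I_P(X_2;Z\mid X_1)=E_X(P)-C_X(P)$, which follows from the chain rule $I(Z;X_1,X_2)=I(Z;X_1)+I(Z;X_2\mid X_1)$. Second, I will point out that the strict feasibility hypothesis is used only to ensure that $\mathcal R_{\rm N}(P)$ is the genuine achievable fixed-$P$ region: it places $P$ in $\mathcal Q_{\rm F}$, so Step~1 of Lemma~\ref{lem:NA-achievable-region} gives a nonempty strict auxiliary-rate system whose closure is \eqref{eq:strong-product-region}. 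The set inclusion itself holds without this hypothesis, and it is trivial if $\mathcal R_{\rm W}(P)$ is empty.
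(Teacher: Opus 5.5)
Your proposal is correct and follows the paper's own argument. Both proofs note that $R_2\le B_X(P)$ and $R_1\le A_X(P)-C_X(P)$ carry over unchanged, and both obtain the third constraint from the chain $R_1\le A_X(P)+R_2-E_X(P)\le A_X(P)+B_X(P)-E_X(P)$, using $R_2\le B_X(P)$.
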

\begin{IEEEproof}
The weak bound directly gives $R_2\leq B_X(P)$ and
$R_1\leq A_X(P)-C_X(P).$ 
Following the fact that $A_X(P)+R_2-E_X(P)\leq A_X(P)+B_X(P)-E_X(P)$ as $R_2\leq B_X(P),$ every weak-bound point satisfies \eqref{eq:strong-product-region}.
Thus every weak-bound point satisfies
\eqref{eq:strong-product-region}.
\end{IEEEproof}
The same containment persists after taking unions over strictly
feasible product distributions, convex hulls, and closures on both
sides. This does not assert fixed-distribution achievability for a
distribution at which any condition in
\eqref{eq:product-strict-feasibility} fails; a boundary point added by
closure is justified only as a limit of rate points obtained from
strictly feasible distributions.

\subsection{Non-adaptive and key-exchange inner bounds}
For a product distribution $P=P_{X_1}P_{X_2}$ satisfying
\eqref{eq:product-strict-feasibility}, the key-exchange region is
\begin{align}
\mathcal R_{\rm A}(P)
=&\Bigl\{(R_1,R_2)\in\mathbb R_+^2:
R_1\leq A_X(P),\notag\\
&R_2\leq B_X(P),\notag\\
&R_1\leq A_X(P)+B_X(P)-E_X(P),\notag\\
&R_1+R_2\leq A_X(P)+B_X(P)-C_X(P)
\Bigr\}.
\label{eq:adaptive-product-region}
\end{align}
Corollary~\ref{cor:comparison}, specialized to $V_i=X_i$, gives
\begin{equation}
\mathcal R_{\rm N}(P)\subseteq\mathcal R_{\rm A}(P).
\end{equation}
The inclusion may be strict in the individual-rate direction, while
the maximum sum-rates agree by \eqref{eq:common-sum-rate} and
\eqref{eq:product-general-identification}. 

\section{Conclusion}\label{Sec: Conclusion}
We derived non-adaptive and key-exchange-based adaptive achievable regions for the TW-WC under strong one-sided secrecy. For every fixed distribution in $\mathcal Q_{\rm F}$, key exchange can improve User~1's individual rate while leaving the maximum sum-rate unchanged. The overall adaptive region 
includes the non-adaptive achievable region. 
For every fixed number of rounds, error and leakage decrease exponentially in the per-round blocklength. The reduced-construction proof uses conditional selected-subindex resolvability, a round-wise factorization, a one-time-pad inequality with side information, and an ideal-to-actual transfer. A growing-round sequence with fixed-input padding removes the initialization-rate loss and yields vanishing error and leakage for every sufficiently large total blocklength. No positive exponent per total blocklength is claimed for this growing-round sequence. The full and reduced auxiliary-rate systems have the same projection, while the main achievability proof relies on the reduced construction.
\section*{Acknowledgements} 
During the preparation of this manuscript, the authors used Microsoft Copilot to assist with language editing, the organization of the text, and the presentation and verification of certain mathematical derivations.
All AI-assisted material was critically reviewed, verified, and revised by the authors, who take full responsibility for the accuracy and integrity of the manuscript.

\appendices
\section{Additional details for Lemma \ref{lem:NA-achievable-region}}\Label{APP: Proof of Lemma NA region}
The main text establishes achievability by connecting the Shannon-information auxiliary-rate system to the finite-sss exponential bounds. This appendix supplies the corresponding Fourier--Motzkin elimination.
For completeness, Appendix~\ref{APP: FM-Non-Adaptive} provides the Fourier--Motzkin elimination leading to the stated Shannon-information projection.

\section{Proof of Lemma \ref{lem:selected-resolvability}}\Label{APP: Proof of Lemma Selected-Resolvability}

For fixed $(a_1,a_2)$, the conditional output distribution is the uniform mixture over $(b_1,j_2,b_2)$. Applying \cite[Lemma~2]{HC2023}, as specialized in the derivation of \cite[Eqs.~(58)--(59)]{HC2023}, with effective randomization sizes $e^{nR_{B_1}}$ and $e^{n(R_{J_2}+R_{B_2})}$ yields the three nonempty-subset terms in \eqref{eq:selected-resolvability}. Averaging over $(a_1,a_2)$ and using the divergence decomposition
\begin{align*}
	&\sum_{a_1,a_2}P(a_1,a_2)
	D(P_{Z^n\mid a_1,a_2,\mathsf C}\Vert P_{Z^n})\\
	&\quad=I(A_1,A_2;Z^n\mid\mathsf C)
	+D(P_{Z^n\mid\mathsf C}\Vert P_{Z^n})
\end{align*}
completes the proof after dropping the last nonnegative term.

\section{Proof of Lemma \ref{lem:otp-side-information}}\Label{APP: Proof of Lemma Otp-side-information}
Fix $w$ in the support of $W$. The stated independence assumptions
imply
\begin{align*}
I(U;C,S\mid W=w)
&=I(U;C\mid S,W=w)\\
&=H(C\mid S,W=w)-H(K\mid S,W=w)\\
&\leq\log|G|-H(K\mid S,W=w)\\
&=I(K;S\mid W=w).
\end{align*}
Averaging both sides over $W$ proves
\eqref{eq:otp-side-information}. The projection statement follows
from uniformity under a surjective homomorphism and data processing.
\section{Proof of Lemma \ref{lem:ideal-real-transfer}}\Label{APP: Proof of Lemma Ideal-real-transfer}
For each fixed $\boldsymbol c$, the coupled processes are identical unless a key used by a later encoder has been decoded incorrectly, which proves the total-variation assertion by the coupling inequality. Define the good and bad codebook sets by
\[
\mathcal G:=\{\boldsymbol c:p_{n,T}(\boldsymbol c)\leq1/2\},
\qquad \mathcal G^{\rm c}:=\{\boldsymbol c:p_{n,T}(\boldsymbol c)>1/2\}.
\]
On $\mathcal G$, the payload marginal is the same uniform distribution in the two processes. We use the following finite-alphabet conditional-entropy continuity inequality:
\begin{align*}
|H_P(X\mid Y)-H_Q(X\mid Y)|&\leq 2\varepsilon\log|\mathcal X|+2h_2(\varepsilon),\\
\|P_{XY}-Q_{XY}\|_{\rm TV}&\leq\varepsilon\leq\tfrac12.
\end{align*}
Hence the common term $H(\mathbf M_1^{2:T})$ cancels when the two mutual informations are compared, and the inequality gives
\begin{align*}
	I_{\rm real}(\mathbf M_1^{2:T};Z^{nT}\mid\boldsymbol c)
	&\leq I_{\rm ideal}(\mathbf M_1^{2:T};Z^{nT}\mid\boldsymbol c)\\
	&\quad+2p_{n,T}(\boldsymbol c)\log|\mathcal{M}_1^{T-1}|
	+2h_2(p_{n,T}(\boldsymbol c)).
\end{align*}
On $\mathcal G^{\rm c}$, the trivial bounds
$0\leq I_{\rm real},I_{\rm ideal}\leq\log|\mathcal{M}_1^{T-1}|$
give
$I_{\rm real}-I_{\rm ideal}\leq\log|\mathcal{M}_1^{T-1}|$.
Markov's inequality yields
$\Pr\{\boldsymbol{\mathsf C}\in\mathcal G^{\rm c}\}\leq2\bar p_{n,T}$.
After averaging the good-set bound and the bad-set trivial bound, the two logarithmic contributions are together at most
$4\bar p_{n,T}\log|\mathcal{M}_1^{T-1}|$.
For the entropy contribution, set
$Q(\boldsymbol c):=p_{n,T}(\boldsymbol c)\mathbf{1}_{\{\boldsymbol c\in\mathcal G\}}$.
Concavity of $h_2$, $\mathbb E Q\leq\bar p_{n,T}\leq1/2$, and monotonicity of $h_2$ on $[0,1/2]$ give
$\mathbb E h_2(Q)\leq h_2(\mathbb E Q)\leq h_2(\bar p_{n,T})$.
This proves \eqref{eq:ideal-real-transfer}--\eqref{eq:eta-definition} without requiring
$p_{n,T}(\boldsymbol c)\leq1/2$ for every codebook realization.
Finally, $F_{n,T}$ is contained in the union of the corresponding nominal-index decoding-error events, so a union bound over the rounds and the one-block decoding estimate give \eqref{eq:pbar-bound}.

\section{Proof of Lemma \ref{L3TA}}\Label{APP: Proof of Lemma L3TA}
Generate the $T$ round codebooks independently. The one-block decoding estimate and a union bound give the ensemble-average reliability bound without the leading factor~2.

We first analyze the coupled ideal-key process, in which the true previous-round key is used in the ciphertext. Let $\boldsymbol{\mathsf C}$ denote the full codebook collection. We track the accumulated leakage, including the key required in the next round, through the quantity
\begin{equation}
	L_t:=\mathbb E_{\boldsymbol{\mathsf C}}
	I( M_{1}[2:t],K_t^{\rm full};Z^n[1:t]\mid\boldsymbol{\mathsf C}),
\end{equation}
where $M_{1}[2:t]=(M_1^2,\ldots, M_1^t),$ $M_1^i=(M_{1,\mathrm{sec}}^i, M_{1,e}^i)$ for $i\in [2:t],$ and $Z^n[1:t]=(Z^n[1],\ldots,Z^n[t]).$
In round~1, Lemma~\ref{lem:selected-resolvability} protects $(D_{1,\mathrm{sec}},K_1^{\rm full})$. Discarding the dummy coordinate gives $L_1\leq\delta_n$.

For $t\geq2$, introduce the following notation:
\begin{align*}
 W_t&:=M_1[2:t-1],& K^-_t&:=K_{t-1},& U_t&:=M_{1,e}^t,\\
 S_t&:=(M_{1,\mathrm{sec}}^t,K_t^{\rm full}),& E^{t-1}&:=Z^n[1:t-1],& Z_t&:=Z^n[t].
\end{align*}
Here $W_t$ denotes the accumulated payload and is distinct from the fixed input distribution $P$. 
Write $\boldsymbol{\mathsf C}=(\boldsymbol{\mathsf C}_{<t},\mathsf C_t,\boldsymbol{\mathsf C}_{>t})$.

\begin{lemma}[Round-wise conditional structure]\label{lem:roundwise-conditional-structure}
For the ideal process and every fixed realization of $\boldsymbol{\mathsf C}$, the following properties hold:
\begin{align}
 S_t&\perp (W_t,K^-_t,E^{t-1})\mid\boldsymbol{\mathsf C},\label{eq:fresh-pair-independence}\\
 C_t&\perp(W_t,K^-_t,E^{t-1},S_t)\mid\boldsymbol{\mathsf C},\label{eq:ciphertext-independence}\\
 P_{Z_t\mid W_t,E^{t-1},S_t,C_t,\boldsymbol{\mathsf C}} &=P_{Z_t\mid S_t,C_t,\boldsymbol{\mathsf C}},\label{eq:round-kernel}\\
 P_{Z_t\mid W_t,E^{t-1},S_t,\boldsymbol{\mathsf C}} &=P_{Z_t\mid S_t,\boldsymbol{\mathsf C}},\label{eq:round-averaged-kernel}
\end{align}
In addition, the following Markov chain holds:
\begin{equation}\label{eq:round-otp-markov}
 U_t-(C_t,E^{t-1},W_t,S_t,\boldsymbol{\mathsf C})-Z_t.
\end{equation}
Consequently, the following two identities hold:
\begin{align}
 I(S_t;E^{t-1},Z_t\mid\boldsymbol{\mathsf C})
 &=I(S_t;Z_t\mid\boldsymbol{\mathsf C}),\label{eq:round-fresh-identity}\\
 I(W_t;E^{t-1},Z_t\mid S_t,\boldsymbol{\mathsf C})
 &=I(W_t;E^{t-1}\mid\boldsymbol{\mathsf C}).\label{eq:round-old-identity}
\end{align}
\end{lemma}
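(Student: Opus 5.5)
I will prove the lemma by working with the ideal process for one fixed codebook realization $\boldsymbol{\mathsf C}=\boldsymbol c$. The first step is to write the joint law of the round-$t$ variables as a product. Conditioned on $\boldsymbol c$, the history $(W_t,K^-_t,E^{t-1})$ is a function of the round-$<t$ fresh variables (messages, keys, dummy and open indices) and the channel noise in rounds $1,\dots,t-1$. The variables $S_t=(M_{1,\mathrm{sec}}^t,K_t^{\rm full})$ are generated fresh in round $t$, independently of all of these. This gives \eqref{eq:fresh-pair-independence} at once.

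For \eqref{eq:ciphertext-independence} I will repeat the group-translation computation of Lemma~\ref{lem:current-round-index-factorization}, now conditioned on $(W_t,K^-_t,E^{t-1},S_t)$. Since $U_t=M_{1,e}^t$ is uniform on $G_n$ and independent of $(W_t,K^-_t,E^{t-1},S_t)$, we get $\Pr\{C_t=c\mid w,k,e,s\}=\Pr\{U_t=c\ominus k\}=|G_n|^{-1}$. So $C_t$ is uniform and independent of that tuple.

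Next come the kernel identities. In the ideal process, given $\boldsymbol c$, the round-$t$ output $Z_t$ is generated as follows. The encoders map the indices $(M_{1,\mathrm{sec}}^t,(C_t,M_{1,o}^t))$ and $((M_2^t,K_t^{\rm full}),M_{2,o}^t)$ to codewords of $\mathsf C_t$. These codewords pass through the memoryless prefix channels $P_{X_i|V_i}$ and then through the channel with fresh noise. The quantities $M_2^t$, $M_{1,o}^t$, $M_{2,o}^t$ and the round-$t$ noise are independent of $(W_t,E^{t-1},S_t,C_t)$. Hence $Z_t$ depends on the past only through $(S_t,C_t)$, which is \eqref{eq:round-kernel}.

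Averaging \eqref{eq:round-kernel} over $C_t$ gives \eqref{eq:round-averaged-kernel}. This step uses \eqref{eq:ciphertext-independence}: the conditional law of $C_t$ given $(W_t,E^{t-1},S_t)$ equals its marginal, which is uniform.

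For the Markov chain \eqref{eq:round-otp-markov}, note that $U_t$ together with $C_t$ determines $K^-_t$. The argument above shows that $Z_t$ is conditionally independent of everything generated before round $t$ and of $U_t$, given $(S_t,C_t)$. Since $\{U_t,K^-_t\}$ enter only through $C_t$, the chain follows.

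The two identities come from the chain rule. For \eqref{eq:round-fresh-identity}, expand
\[
I(S_t;E^{t-1},Z_t)=I(S_t;E^{t-1})+I(S_t;Z_t\mid E^{t-1}).
\]
The first term vanishes by \eqref{eq:fresh-pair-independence}. For the second term, \eqref{eq:round-averaged-kernel} gives $P_{Z_t|E^{t-1},S_t}=P_{Z_t|S_t}$, and independence of $S_t$ from $E^{t-1}$ then turns it into $I(S_t;Z_t)$.

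For \eqref{eq:round-old-identity}, expand
\[
I(W_t;E^{t-1},Z_t\mid S_t)=I(W_t;E^{t-1}\mid S_t)+I(W_t;Z_t\mid E^{t-1},S_t).
\]
The second term is zero by \eqref{eq:round-averaged-kernel}. The first term equals $I(W_t;E^{t-1})$, because $S_t$ is independent of the pair $(W_t,E^{t-1})$.

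The main obstacle is stating the product-form factorization precisely enough that the joint independence claims, rather than just pairwise ones, are justified. The delicate point is that $K^-_t$ is correlated with $E^{t-1}$. The independence of $C_t$ from this correlated pair relies on conditioning on $K^-_t$ itself inside the translation argument, and I will verify it explicitly in that form.
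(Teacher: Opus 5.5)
Your proposal is correct and follows essentially the same route as the paper: freshness of $S_t$ for \eqref{eq:fresh-pair-independence}, the group-translation argument of Lemma~\ref{lem:current-round-index-factorization} conditioned on $K^-_t$ for \eqref{eq:ciphertext-independence}, averaging over the fresh indices $M_2^t,M_{1,o}^t,M_{2,o}^t$ and the round-$t$ channel randomness for the kernel identities and the Markov chain, and the chain rule for the two identities. The only difference is cosmetic: in \eqref{eq:round-old-identity} you drop the $Z_t$ term before removing the conditioning on $S_t$, whereas the paper states these steps in the reverse order, and you also spell out the step $(S_t,Z_t)\perp E^{t-1}$ that the paper leaves implicit in \eqref{eq:round-fresh-identity}.
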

\begin{IEEEproof}
The fresh pair $S_t$ and all fresh open variables are generated independently of preceding-round variables and independently of the codebooks, which proves \eqref{eq:fresh-pair-independence}. By Lemma~\ref{lem:current-round-index-factorization}, the fresh uniform variable $U_t$ makes $C_t=U_t\oplus K^-_t$ uniform and independent of $(W_t,K^-_t,E^{t-1},S_t)$, proving \eqref{eq:ciphertext-independence}. After averaging the fresh open indices, channel memorylessness and independence of $\mathsf C_t$ give \eqref{eq:round-kernel}. Averaging this kernel over the uniform $C_t$ gives \eqref{eq:round-averaged-kernel}. Once $(C_t,S_t,\mathsf C_t)$ and the fresh open indices have selected the current channel inputs, $Z_t$ has no further dependence on $U_t$, which gives \eqref{eq:round-otp-markov}.

For \eqref{eq:round-fresh-identity}, use the chain rule, $S_t\perp E^{t-1}\mid\boldsymbol{\mathsf C}$, and \eqref{eq:round-averaged-kernel}. For \eqref{eq:round-old-identity}, first remove $S_t$ using \eqref{eq:fresh-pair-independence}; then remove $Z_t$ because \eqref{eq:round-averaged-kernel} implies $Z_t\perp(W_t,E^{t-1})\mid(S_t,\boldsymbol{\mathsf C})$.
\end{IEEEproof}

The chain rule separates the fresh protected pair from the accumulated payload and current encrypted component as follows:
\begin{align}
	I(W_t,U_t,S_t;E^{t-1},Z_t\mid\boldsymbol{\mathsf C})
	&=I(S_t;E^{t-1},Z_t\mid\boldsymbol{\mathsf C})\notag\\
	&\quad+I(W_t,U_t;E^{t-1},Z_t\mid S_t,\boldsymbol{\mathsf C}).\label{eq:recursion-split}
\end{align}
Lemma~\ref{lem:roundwise-conditional-structure} gives the identity
\begin{equation}\label{eq:fresh-term}
	I(S_t;E^{t-1},Z_t\mid\boldsymbol{\mathsf C})
	=I(S_t;Z_t\mid\boldsymbol{\mathsf C}).
\end{equation}
Condition on the past and on all codebooks except $\mathsf C_t$. Since $S_t$ is independent of the past history conditional on the codebooks, adjoining that history can only increase the mutual information relevant to the bound. More precisely,
\begin{align}
&I(S_t;Z_t\mid\boldsymbol{\mathsf C})\notag\\
&\quad\leq I(S_t;Z_t,\mathcal H_{t-1}\mid\boldsymbol{\mathsf C})\notag\\
&\quad=I(S_t;Z_t\mid\mathcal H_{t-1},\boldsymbol{\mathsf C}),
\label{eq:fresh-term-history}
\end{align}
where the equality uses $S_t\perp\mathcal H_{t-1}\mid\boldsymbol{\mathsf C}$. Lemma~\ref{lem:current-round-index-factorization} then permits Lemma~\ref{lem:selected-resolvability} to be applied to the conditional expectation over $\mathsf C_t$. The tower property therefore bounds the expectation of \eqref{eq:fresh-term}, via \eqref{eq:fresh-term-history}, by $\delta_n$.

The chain rule decomposes the second term on the right-hand side of \eqref{eq:recursion-split} as follows:
\begin{align}
	I(W_t,U_t;E^{t-1},Z_t\mid S_t,\boldsymbol{\mathsf C})
	&=I(W_t;E^{t-1},Z_t\mid S_t,\boldsymbol{\mathsf C})\notag\\
	&\quad+I(U_t;E^{t-1},Z_t\mid W_t,S_t,\boldsymbol{\mathsf C}).\label{eq:old-otp-split}
\end{align}
The old-term identity \eqref{eq:round-old-identity} gives
\begin{equation}
	I(W_t;E^{t-1},Z_t\mid S_t,\boldsymbol{\mathsf C})
	=I(W_t;E^{t-1}\mid\boldsymbol{\mathsf C}).\label{eq:old-term}
\end{equation}
The Markov relation \eqref{eq:round-otp-markov}, data processing, Lemma~\ref{lem:otp-side-information} conditioned on $(W_t,S_t,\boldsymbol{\mathsf C})$, and \eqref{eq:fresh-pair-independence} give the OTP-absorption bound
\begin{align}
	I(U_t;E^{t-1},Z_t\mid W_t,S_t,\boldsymbol{\mathsf C})
	&\leq I(U_t;C_t,E^{t-1}\mid W_t,S_t,\boldsymbol{\mathsf C})\notag\\
	&\leq I(K^-_t;E^{t-1}\mid W_t,\boldsymbol{\mathsf C}).\label{eq:otp-absorption}
\end{align}
Since $K^-_t$ is independent of $W_t$ before $E^{t-1}$ is observed, the chain rule gives the identity
\begin{align}
	I(W_t;E^{t-1}\mid\boldsymbol{\mathsf C})
	+I(K^-_t;E^{t-1}\mid W_t,\boldsymbol{\mathsf C})
	=I(W_t,K^-_t;E^{t-1}\mid\boldsymbol{\mathsf C}).\label{eq:absorption-identity}
\end{align}
Combining \eqref{eq:recursion-split}--\eqref{eq:absorption-identity} and then taking iterated expectation over the current and past codebooks gives the recursion
\begin{equation}\label{eq:ensemble-recursion}
	L_t\leq L_{t-1}+\delta_n.
\end{equation}
Thus $L_T\leq T\delta_n$, and data processing after discarding $K_T^{\rm full}$ gives the ensemble-average ideal-process payload leakage bound $T\delta_n$. Lemma~\ref{lem:ideal-real-transfer} therefore gives the ensemble-average actual-process bound $T\delta_n+\eta_{n,T}$.

Let $a_n$ denote the displayed ensemble-average reliability bound without the leading factor~2, and let $b_n:=T\delta_n+\eta_{n,T}$ denote the ensemble-average actual-process leakage bound. If both are positive, define the normalized sum criterion
\[
X(\boldsymbol{\mathsf C})=
\frac{P_e(\boldsymbol{\mathsf C})}{a_n}
+\frac{I(\mathbf M_1^{2:T};Z^{nT}\mid\boldsymbol{\mathsf C})}{b_n}.
\]
Since $\mathbb E X\leq2$, there exists a deterministic codebook realization for which $X\leq2$. For this realization, each criterion is at most twice its corresponding ensemble average. In particular, the leakage is at most $2T\delta_n+2\eta_{n,T}$. A zero denominator means that the associated nonnegative criterion vanishes almost surely and is handled directly. This proves \eqref{eq:ReliabilityDA} and \eqref{eq:SSecDA}.

\section{Additional details for Lemma \ref{lem:A-achievable-region}}\Label{APP: Proof of Lemma A-Projection}
The main proof of Lemma~\ref{lem:A-achievable-region} gives the reduced-system projection and the continuity-based achievability argument. 
Appendices~\ref{APP: FM-Full-Adaptive} and~\ref{APP: FM-OS-Adaptive} provide the corresponding Fourier--Motzkin eliminations and show that the full and reduced constructions have the same projected region.

\section{Fourier-Motzkin elimination: non-adaptive region}\Label{APP: FM-Non-Adaptive}
Fix $P\in\mathcal Q$. All information quantities in this appendix
are evaluated under this fixed distribution and retain their explicit
dependence on $P$. To derive the secrecy region by non-adaptive coding, recall that we have the following rate constraints:
\allowdisplaybreaks
\begin{align}
	R_{1} + R_{1,r} <& A(P),\Label{FM_OS_1}\\
	R_{2} + R_{2,r} <& B(P), \Label{FM_OS_2}\\
	R_{1,r} > & C(P), \Label{FM_OS_3}\\
	R_{2} + R_{2,r} > & D(P),\Label{FM_OS_4}\\
	R_{1,r} + R_{2} +R_{2,r} > & E(P). \Label{FM_OS_5}
\end{align}
Eliminating $R_{1,r}$ from \eqref{FM_OS_1}, \eqref{FM_OS_3}, and \eqref{FM_OS_5} yields the following constraints:
\begin{align}
	R_{1}  <& A(P)-C(P),\Label{FM_OS_6}\\
	R_{1}-R_{2} - R_{2,r}  <& A(P)-E(P).\Label{FM_OS_7}
\end{align}
Eliminating $R_{2,r}$ from \eqref{FM_OS_2}, \eqref{FM_OS_4}, and \eqref{FM_OS_7} yields the following constraints:
\begin{align}
	 D(P) <& B(P), \Label{FM_OS_8}\\
	R_{2} <& B(P), \Label{FM_OS_9}\\
	R_{1} <& A(P)+B(P)-E(P).\Label{FM_OS_10}
\end{align} 
Therefore, under the exact strict feasibility conditions $C(P)<A(P)$, $D(P)<B(P)$, and $E(P)<A(P)+B(P)$, the following region is achievable 
\allowdisplaybreaks
\begin{align*}
	R_{1}  <& A(P)-\max\{C(P), E(P)-B(P)\},\\
	R_{2} <& B(P).
\end{align*}

\section{Fourier-Motzkin elimination: adaptive region}\Label{APP: FM-Full-Adaptive}
Fix $P\in\mathcal Q$. All information quantities in this appendix
are evaluated under this fixed distribution and retain their explicit
dependence on $P$. To derive the one-sided secrecy region by the adaptive key-exchange construction, recall that we have the following rate constraints. All component rates appearing below are nonnegative.
\allowdisplaybreaks
\begin{align}
	R_1 =& R_{1,\mathrm{sec}} + R_{1,e}, \Label{FM_Ind_8}\\ 
	R_2 =& R_{2,\mathrm{sec}} + R_{2,e},\Label{FM_Ind_9}\\
	R_{2,\mathrm{sec}} + R_{2,k}+R_{2,o} + R_{2,e} <& B(P), \Label{FM_Ind_7}\\
	R_{1,\mathrm{sec}} + R_{1,k}+R_{1,o} + R_{1,e}  <& A(P),\Label{FM_Ind_6}\\
	R_{1,e} \leq & R_{2,k}, \Label{FM_Ind_4}\\
	R_{2,e} \leq & R_{1,k}, \Label{FM_Ind_5}\\
	R_{1,o} + R_{1,e} > & C(P), \Label{FM_Ind_1}\\
	R_{2,o} + R_{2,e} + R_{2,\mathrm{sec}} > & D(P),\Label{FM_FA_2}\\
	R_{1,o} + R_{1,e} +R_{2,o} + R_{2,e}+ R_{2,\mathrm{sec}}> & E(P). \Label{FM_FA_3}
\end{align}
First consider \eqref{FM_Ind_6}, \eqref{FM_Ind_1} and \eqref{FM_FA_3} to remove $R_{1,o}.$ We obtain the following constraints:
\begin{align}
	R_{1,\mathrm{sec}} + R_{1,k}+ R_{1,e}  <& A(P),\Label{FM_Ind_11}\\
	R_{1,\mathrm{sec}} + R_{1,k}  <& A(P)-C(P),\Label{FM_Ind_12}\\
	R_{2,o} + R_{2,e}+ R_{2,\mathrm{sec}} -	R_{1,\mathrm{sec}} - R_{1,k}> & E(P)-A(P). \Label{FM_Ind_13}
\end{align}
Consider \eqref{FM_Ind_7} , \eqref{FM_FA_2} and \eqref{FM_Ind_13} to remove $R_{2,o}.$ We obtain the following constraints:
\begin{align}
	R_{2,\mathrm{sec}} + R_{2,k}+R_{2,e} <& B(P), \Label{FM_Ind_14}\\
	R_{2,k}  <& B(P)-D(P), \Label{FM_Ind_15}\\
	R_{1,k} + R_{2,k} + R_{1,\mathrm{sec}}  <& A(P)+B(P)-E(P).\Label{FM_Ind_16}
\end{align} 
Consider \eqref{FM_Ind_5}, \eqref{FM_Ind_11}, \eqref{FM_Ind_12} and \eqref{FM_Ind_16} to remove $R_{1,k}.$ We obtain the following constraints:
\begin{align}
	R_{1,\mathrm{sec}} + R_{1,e}+ R_{2,e}  <& A(P),\Label{FM_Ind_17}\\
	R_{1,\mathrm{sec}} + R_{2,e}  <& A(P)-C(P),\Label{FM_Ind_18}\\
	R_{2,e} + R_{2,k} + R_{1,\mathrm{sec}}  <& A(P)+B(P)-E(P).\Label{FM_Ind_19}
\end{align} 
Consider \eqref{FM_Ind_4}, \eqref{FM_Ind_14} and \eqref{FM_Ind_15}  and \eqref{FM_Ind_19} to remove $R_{2,k}.$ We obtain the following constraints:
\begin{align}
	R_{2,\mathrm{sec}} + R_{2,e}+R_{1,e} <& B(P), \Label{FM_Ind_20}\\
	R_{1,e}  <&  B(P)-D(P), \Label{FM_Ind_21}\\
	R_{1,e} + R_{2,e} + R_{1,\mathrm{sec}} <& A(P)+B(P)-E(P).\Label{FM_Ind_22}
\end{align} 
Next, we remove $R_{1,e}$ and $R_{2,e}$ replacing them by $R_1-R_{1,\mathrm{sec}}$ and $R_2-R_{2,\mathrm{sec}}$ (according to \eqref{FM_Ind_8} and \eqref{FM_Ind_9}), respectively, in \eqref{FM_Ind_17}, \eqref{FM_Ind_18}, \eqref{FM_Ind_20}, \eqref{FM_Ind_21} and \eqref{FM_Ind_22}. Together with the non-negativity of $R_{1,e}$ and $R_{2,e}$, we obtain
\allowdisplaybreaks
\begin{align}
	R_1 \geq & R_{1,\mathrm{sec}}, \Label{FM_Ind_8C}\\ 
	R_2 \geq & R_{2,\mathrm{sec}},\Label{FM_Ind_9C}\\
	R_1+ R_2-R_{2,\mathrm{sec}}  <& A(P),\Label{FM_Ind_17C}\\
	R_{1,\mathrm{sec}} + R_2-R_{2,\mathrm{sec}}  <&A(P)-C(P),\Label{FM_Ind_18C}\\
	R_2 + R_1-R_{1,\mathrm{sec}} <& B(P), \Label{FM_Ind_20C}\\
	R_1-R_{1,\mathrm{sec}}  <& B(P)-D(P), \Label{FM_Ind_21C}\\
	R_1+ R_2- R_{2,\mathrm{sec}}  <& A(P)+B(P)-E(P).\Label{FM_Ind_22C}
\end{align}
Consider  \eqref{FM_Ind_8C}, \eqref{FM_Ind_18C},  \eqref{FM_Ind_20C} and \eqref{FM_Ind_21C} to remove $R_{1,\mathrm{sec}}.$ We obtain the following constraints:
\allowdisplaybreaks
\begin{align}
	R_2<& B(P), \Label{FM_Ind_23}\\
	D(P) <& B(P), \Label{FM_Ind_24}\\
	R_2-R_{2,\mathrm{sec}}  <&  A(P)-C(P),\Label{FM_Ind_26}\\
	R_1+2R_2-R_{2,\mathrm{sec}}  <& A(P)+B(P)-C(P),\Label{FM_Ind_27}\\
	R_1+R_2-R_{2,\mathrm{sec}}<& A(P)+B(P)-C(P)-D(P),\Label{FM_Ind_28}
\end{align}
Consider \eqref{FM_Ind_9C}, \eqref{FM_Ind_17C}, \eqref{FM_Ind_22C}, \eqref{FM_Ind_26}, \eqref{FM_Ind_27} and \eqref{FM_Ind_28} to remove $R_{2,\mathrm{sec}}.$ We obtain the following constraints:
\allowdisplaybreaks
\begin{align}
	R_1 <& A(P),\Label{FM_Ind_30}\\
	R_1 <& A(P)+B(P)-E(P),\Label{FM_Ind_37}\\
	C(P)< & A(P),\Label{FM_Ind_31}\\
	R_1 +R_2< &  A(P)+B(P)-C(P),\Label{FM_Ind_33}\\
	R_1 < & A(P)+B(P)-C(P)-D(P). \Label{FM_Ind_34}
\end{align}
Therefore, under the exact strict feasibility conditions $D(P) < B(P)$, $C(P) < A(P)$, and $E(P)<A(P)+B(P),$ the projection of this auxiliary-rate system is 
\allowdisplaybreaks
\begin{align*}
	R_{1}  <& A(P),\\
	R_{1}  <& A(P)+B(P)-\max\{E(P), C(P)+D(P)\},\\
	R_{2} <& B(P),\\
	R_1 + R_2 <& A(P)+B(P)-C(P).
\end{align*}

\section{Fourier-Motzkin elimination: adaptive region with one-sided reduction}\Label{APP: FM-OS-Adaptive}
Fix $P\in\mathcal Q$. All information quantities in this appendix
are evaluated under this fixed distribution and retain their explicit
dependence on $P$. To derive the one-sided secrecy region by the adaptive key-exchange construction with one-sided reduction, recall that we have the following rate constraints:
\allowdisplaybreaks
\begin{align}
	R_1 =& R_{1,\mathrm{sec}} + R_{1,e}, \Label{FM_OSA_8}\\ 
	R_2 =& R_{2,\mathrm{sec}} \Label{FM_OSA_9}\\
	R_{2,\mathrm{sec}} + R_{2,k}+R_{2,o} <& B(P), \Label{FM_OSA_7}\\
	R_{1,\mathrm{sec}} + R_{1,o} + R_{1,e}  <& A(P),\Label{FM_OSA_6}\\
	R_{1,e} \leq & R_{2,k}, \Label{FM_OSA_4}\\
	R_{1,o} + R_{1,e} > & C(P), \Label{FM_OSA_1}\\
	R_{2,o} + R_{2,\mathrm{sec}} > & D(P),\Label{FM_OSA_2}\\
	R_{1,o} + R_{1,e} +R_{2,o} + R_{2,\mathrm{sec}}> & E(P). \Label{FM_OSA_3}
\end{align}
First consider \eqref{FM_OSA_6}, \eqref{FM_OSA_1} and \eqref{FM_OSA_3} to remove $R_{1,o}.$ We obtain the following constraints:
\begin{align}
	R_{1,\mathrm{sec}} + R_{1,e}  <& A(P),\Label{FM_OSA_11}\\
	R_{1,\mathrm{sec}}  			<& A(P)-C(P),\Label{FM_OSA_12}\\
	R_{2,o} + R_{2,\mathrm{sec}} -	R_{1,\mathrm{sec}} > & E(P)-A(P). \Label{FM_OSA_13}
\end{align}
Consider \eqref{FM_OSA_7} , \eqref{FM_OSA_2} and \eqref{FM_OSA_13} to remove $R_{2,o}.$ We obtain the following constraints:
\begin{align}
	R_{2,\mathrm{sec}} + R_{2,k}<& B(P), \Label{FM_OSA_14}\\
	R_{2,k}  <& B(P)-D(P), \Label{FM_OSA_15}\\
	R_{2,k} + R_{1,\mathrm{sec}}  <& A(P)+B(P)-E(P).\Label{FM_OSA_16}
\end{align} 

Consider \eqref{FM_OSA_4}, \eqref{FM_OSA_14} and \eqref{FM_OSA_15}  and \eqref{FM_OSA_16} to remove $R_{2,k}.$ We obtain the following constraints:
\begin{align}
	R_{2,\mathrm{sec}} + R_{1,e} <& B(P), \Label{FM_OSA_20}\\
	R_{1,e}  <&  B(P)-D(P), \Label{FM_OSA_21}\\
	R_{1,e} + R_{1,\mathrm{sec}}  <& A(P)+B(P)-E(P).\Label{FM_OSA_22}
\end{align} 

Next, using $R_{1,e}=R_1-R_{1,\mathrm{sec}}$ from \eqref{FM_OSA_8} and $R_{2,\mathrm{sec}}=R_2$ from \eqref{FM_OSA_9}, we eliminate $R_{1,e}$ and $R_{2,\mathrm{sec}}$ in \eqref{FM_OSA_11}, \eqref{FM_OSA_12}, \eqref{FM_OSA_20}, \eqref{FM_OSA_21}, and \eqref{FM_OSA_22}. Together with $R_{1,e}\geq 0$, equivalently $R_{1,\mathrm{sec}}\leq R_1$, we obtain
\allowdisplaybreaks
\begin{align}
	R_1  <& A(P),\Label{FM_OSA_11C}\\
	R_{1,\mathrm{sec}} <& A(P)-C(P),\Label{FM_OSA_12C}\\
	R_1 \geq & R_{1,\mathrm{sec}}, \Label{FM_OSA_8C}\\
	R_2 + R_1-R_{1,\mathrm{sec}} <& B(P), \Label{FM_OSA_20C}\\
	R_1-R_{1,\mathrm{sec}}  <& B(P)-D(P), \Label{FM_OSA_21C}\\
	R_1  <& A(P)+B(P)-E(P).\Label{FM_OSA_22C}
\end{align}
Consider \eqref{FM_OSA_12C}, \eqref{FM_OSA_8C}, \eqref{FM_OSA_20C} and \eqref{FM_OSA_21C} to remove $R_{1,\mathrm{sec}}.$ We obtain the following constraints:
\allowdisplaybreaks
\begin{align}
	R_2<& B(P), \Label{FM_OSA_23}\\
	D(P) <& B(P), \Label{FM_OSA_24}\\
	C(P)< & A(P), \Label{FM_OSA_25}\\
	R_1 + R_2 <& A(P)+B(P)-C(P), \Label{FM_OSA_26}\\
	R_1  <& A(P)+B(P)-C(P)-D(P), \Label{FM_OSA_27}
\end{align}

Therefore, under the exact strict feasibility conditions $D(P) < B(P)$, $C(P) < A(P)$, and $E(P)<A(P)+B(P),$ the following region is achievable 
\allowdisplaybreaks
\begin{align*}
	R_{1}  <& A(P),\\
	R_{1}  <& A(P)+B(P)-\max\{E(P), C(P)+D(P)\},\\
	R_{2} <& B(P),\\
	R_1 + R_2 <& A(P)+B(P)-C(P).
\end{align*}


\begin{thebibliography}{00}
	
	\bibitem{Shannon1961}
	C. E. Shannon, 
	``Two-way communication channels,'' 
	\emph{Proc. 4th Berkeley Symp. Math. Stat. and Prob.}, vol. 1, pp. 611 -- 644, 1961.
	
	\bibitem{Dueck1979}
	G. Dueck, 
	``The capacity region of the two-way channel can exceed the inner bound,''
	\emph{Inform. Contr.}, vol. 40, no. 3, pp. 258 -- 266, 1979.
	
	\bibitem{Schalkwijk1983}
	J. P. M. Schalkwijk, 
	``On an extension of an achievable rate region for the binary multiplying channel,'' 
	\emph{IEEE Trans. on Inform. Theory}, 
	vol. 29, no. 3, pp. 445 -- 448, 1983.
	
	\bibitem{Han1984} 
	T. S. Han, 
	``A general coding scheme for the two-way channel,'' 
	\emph{IEEE Trans. on Inform. Theory}, vol. 30, no. 1, pp. 35 -- 44,
	1984.
	
	\bibitem{Varshney2013}
	L. R. Varshney, 
	``Two way communication over exponential family type channels,''
	{\em Proc. 2013 IEEE International Symposium on Information Theory}, 
	Istanbul, Turkey, 2013, pp. 2795 -- 2799.
	
	\bibitem{SAL2016}
	L. Song, F. Alajaji, and T. Linder, 
	``Adaptation is useless for two discrete additive-noise two-way channels,'' 
	{\em Proc. 2016 IEEE International Symposium on Information Theory (ISIT)}, 
	Barcelona, Spain, 2016, pp. 1854 -- 1858.
	
	\bibitem{CVA2017}
	A. Chaaban, L. R. Varshney, and M. -S. Alouini, 
	``The capacity of injective semi-deterministic two-way channels,'' 
	{\em Proc. 2017 IEEE International Symposium on Information Theory (ISIT)}, 
	Aachen, Germany, 2017, pp. 431 -- 435.
	
	\bibitem{ZBS1986}
	Z. Zhang, T. Berger, and J. P.M. Schalkwijk, 
	``New outer bounds to capacity regions of two-way channels,'' 
	\emph{IEEE Trans. on Inform. Theory}, 
	vol. 32, no. 3, pp. 383 -- 386, 1986.

	\bibitem{PW1989}
	 A. P. Hekstra and F. M. J. Willems,
	``Dependence balance bounds for	single output two-way channels,'' 
	\emph{IEEE Trans. on Inform. Theory}, 
	vol. 35, no. 1, pp. 44 -- 53, 1989.
	
	\bibitem{TU2007}
	R. Tandon and S. Ulukus, 
	``On Dependence Balance Bounds for Two Way Channels,'' 
	\emph{41st Asilomar Conference on Signals, Systems and Computers}, 
	Pacific Grove, CA, November 2007.
	
	\bibitem{Shannon1949}
	C. E. Shannon, 
	``Communication theory of secrecy systems,'' 
	\emph{Bell Sys. Tech. J.}, vol. 28, pp. 656 -- 715, 1949.
	
	\bibitem{Wyner1975}
	A. Wyner, 
	``The wire-tap channel,'' 
	\emph{Bell Sys. Tech. J.}, vol. 54, pp. 1355 -- 1387, 1975.
	
	\bibitem{src:Csiszar1996}
	I.~Csisz{\'a}r, 
	``Almost independence and secrecy capacity,'' 
	\emph{Probl. Peredachi Inf.}, vol. 32, no. 1, pp. 48 -- 57, 1996.
	
	\bibitem{MH2006}
	M. Hayashi, 
	``General nonasymptotic and asymptotic formulas in channel resolvability and identification capacity and their application to the wiretap channel,''  
	\emph{IEEE Trans. on Inform. Theory}, 
	vol. 52, no. 4, pp. 1562 -- 1575, 2006.
	
	\bibitem{TY2007} 
	E. Tekin and A. Yener, 
	``Achievable rates for two-way wire-tap channels,'' 
	{\em Proc. 2007 IEEE International Symposium on Information Theory}, 
	Nice, France, 2007, pp. 941 -- 945.
	
	\bibitem{TY2008} 
	E. Tekin and A. Yener, 
	``The general Gaussian multiple-access and two-way wire-tap channels: Achievable rates and cooperative jamming,''
	\emph{IEEE Trans. on Inform. Theory}, vol. 54, no. 3, pp. 2735 -- 2751, 2008.
	
	\bibitem{GKYG2013} 
	A. El Gamal, O. O. Koyluoglu, M. Youssef, and H. El Gamal, 
	``Achievable secrecy rate regions for the two-way wiretap channel,'' 
	\emph{IEEE Trans. on Inform. Theory}, 
	vol. 59, no. 12, pp. 8099 -- 8114, 2013.
	
	
	\bibitem{QCHT2016}
	C. Qi, Y. Chen, A. J. H. Vinck, and X. Tang, 
	``One-sided secrecy over the two-way wiretap channel,'' 
	{\em Proc. 2016 International Symposium on Information Theory and Its Applications (ISITA)}, 
	Monterey, CA, USA, 2016, pp. 626 -- 630.
	
	\bibitem{QDT2017}
	C. Qi, B. Dai and X. Tang, 
	``Achieving both positive secrecy rates of the users in two-way wiretap channel by individual secrecy,'' 
	{\emph{CoRR}}, abs/1707.05930, 2017.
	
	\bibitem{PB2011}
	A. J. Pierrot and M. R. Bloch, 
	``Strongly Secure Communications Over the Two-Way Wiretap Channel,'' 
	{\emph{IEEE Transactions on Information Forensics and Security}}, 
	vol. 6, no. 3, pp. 595 -- 605, 2011.
	
	\bibitem{CH2025}
	Y. Chen and M. Hayashi, 
	``Adaptive Coding for Two-Way Wiretap Channel Under Strong Secrecy,'' 
	\emph{Information Theory and Related Fields}, 
	vol. 14620, pp. 243 -- 273, 2025.
	
	\bibitem{HC2023}
	M. Hayashi and Y. Chen, 
	``Non-Adaptive Coding for Two-Way Wiretap Channel with or without Cost Constraints,''  
	\emph{IEEE Trans. on Inform. Theory}, 
	vol. 70, no. 7, pp. 4611 -- 4633, 2024. 
	
	\bibitem{HT16}
	M. Hayashi and M. Tomamichel, 
	``Correlation Detection and an Operational Interpretation of the Renyi Mutual Information,'' 
	{\em Journal of Mathematical Physics}, vol. 57, no. 10, pp. 102201, 2016.
	
	\bibitem{TH}
	M. Tomamichel and {M. Hayashi}, 
	``Operational interpretation of R\'enyi information measures via composite hypothesis testing against product and Markov distributions,'' 
	\emph{IEEE Trans. on Inform. Theory}, 
	vol. 64, no. 2, pp. 1064 -- 1082, 2018. 
	
	
\end{thebibliography}
\end{document}